\def\header{\vspace{2.5mm} \noindent}
\newcommand{\E}{{\mathbb E}\xspace}
\newcommand{\KM}{\mathit{KM}}
\newcommand{\KC}{\mathit{KC}}
\newcommand{\UB}{\mathit{UB}}
\newcommand{\R}{\mathcal{R}}
\newcommand{\OPT}{\mathrm{OPT}}
\DeclareSymbolFont{largesymbol}{OMX}{yhex}{m}{n}
\DeclareMathAccent{\wht}{\mathord}{largesymbol}{"62}
\begin{document}
\title{Approximation Algorithms for Probabilistic Graphs}
%\titlenote{Produces the permission block, and
%  copyright information}
%\subtitle{Extended Abstract}
%\subtitlenote{The full version of the author's guide is available as
%  \texttt{acmart.pdf} document}

%\author{
%   Kai Han$^\star$~~~~~~~~Chaoting Xu$^\star$~~~~~~~~Xiaokui Xiao$^\dag$~~~~~~~~Fei Gui$^\star$~~~~~~~~He Huang$^\ddag$ \vspace{.8ex}\\
%   \affaddr{$^\star$SCST, University of Science and Technology of China, P.R.China, 230026} \vspace{.5ex}\\
%   \affaddr{$^\dag$SCSE, Nanyang Technological University, Singapore, 639798} \vspace{.5ex}\\
%   \affaddr{$^\ddag$SCST, Soochow University, P.R.China, 215000} \\
%%   \email{hankai@ustc.edu.cn, huangh@suda.edu.cn, junluo@ntu.edu.sg}
%}
%

\author{Kai Han}
%\authornote{Dr.~Trovato insisted his name be first.}
%\orcid{1234-5678-9012}
\affiliation{%
  \institution{School of Computer Science and Technology, University of Science and Technology of China, P.R.China}
}
\email{hankai@ustc.edu.cn}

\begin{abstract}
We study the $k$-median and $k$-center problems in uncertain graphs. We analyze the hardness of these problems, and propose several algorithms with improved approximation ratios compared with the existing proposals.
\end{abstract}

%
% The code below should be generated by the tool at
% http://dl.acm.org/ccs.cfm
% Please copy and paste the code instead of the example below.
%
%\begin{CCSXML}
%<ccs2012>
%<concept>
%<concept_id>10003033.10003106.10003114.10011730</concept_id>
%<concept_desc>Networks~Online social networks</concept_desc>
%<concept_significance>500</concept_significance>
%</concept>
%</ccs2012>
%\end{CCSXML}
%
%\ccsdesc[500]{Networks~Online social networks}
%%
%%
%\keywords{Social networks, discount allocation, revenue}

\maketitle

\section{Introduction} \label{sec:intro}

%Say something about IM. Mentioning the following terms: advertiser, activation, influenced individual.
Graph data are prevalent in a lot of application domains such as social, biological and mobile networks. Typically, the entities in realities are modeled by graph nodes, and the relationships between entities are modeled by graph edges. Uncertainty is evident in graph data due to a variety of reasons.
 %such as noisy measurements, inconsistent, incorrect,
%and possibly ambiguous information sources, lack of precise information needs, inference and prediction models, or explicit manipulation~\cite{}. Therefore, data is usually represented as an uncertain graph, that is, a graph whose nodes, edges, and attributes are accompanied with a probability of existence.
Therefore, the methods for querying and mining uncertain graph data are of paramount importance.

Graph clustering is a fundamental problem in graph data mining, where the goal is to partition the graph nodes into some clusters, such that the nodes in each cluster is ``close'' to each other according to some distance measure. Among the numerous problem definitions on graph clustering, the $k$-median and $k$-center problems are perhaps the most celebrated ones which have been studied for decades~\cite{Vazirani2001}. In a traditional graph (without uncertainty), the goal of the $k$-median problem is to find $k$ centering nodes in the network such that the average distance between each node to the centering nodes is maximized, while the goal of the $k$-center problem is to find a set of $k$ nodes for which the largest distance of any point to its closest vertex in the $k$-set is minimum.

Surprisingly, although the $k$-median and $k$-center problems have been extensively studied in the literature, their counterpart problems in uncertain graphs have not been investigated until a recent study by Ceccarello et al.~\cite{Ceccarello17}. Following a large body of work on uncertain graphs, the work in~\cite{Ceccarello17} models an uncertain graph as a traditional graph augmented by existence probabilities associated to the edges. They use the connection probabilities as the distance measure between the nodes, and formulated the $k$-median and $k$-center problems as follows. In the $k$-median problem, they aim to partition the graph nodes into $k$ subsets (clusterings) with a centering node in each of them, such that the average connection probability between each node and its corresponding centering node is maximized. In the $k$-center problem, they aim to maximize the minimum connection connectivity between a node and its centering node. It can be seen that the definitions of their $k$-median and $k$-center problems are in spirit similar to those for the traditional graphs, so they can be considered as the reinterpretations of the $k$-median and $k$-center problems in traditional graphs.

In contrast to the traditional $k$-median and $k$-center problems, there are two unique challenges for clustering uncertain graphs. First, it is a \#P hard problem to compute the connection probability between any two nodes in an uncertain graph. Second, the distance measure described above does obey the triangle inequality, which is required by almost all of the traditional $k$-center and $k$-median algorithms. Therefore, even if we have an oracle for computing the connection probabilities, the traditional $k$-median and $k$-center algorithms cannot be applied to our case.

Based on the above observations, the work in~\cite{Ceccarello17} provide new algorithms for graph clustering problem in uncertain graphs. However, the approximation ratio of their algorithms are far from satisfactory.

\header
{\bf Contributions.} Motivated by the deficiency of existing techniques, we propose new approximation algorithms for the $k$-median and $k$-center problems in uncertain graphs. Our contributions are summarized as follows.

(1) For the $k$-median problem:

We prove that the $k$-median problem is NP-hard, and propose an approximate algorithm with a $1-1/e$ approximation ratio. We also propose efficient sampling algorithms that achieves a $1-1/e-\epsilon$ approximation ratio when there does not exist an oracle for computing the connection probabilities.

(2) For the $k$-center problem:

We prove that the $k$-center problem is NP-hard to approximate within any bounded ratio. We first propose a simple algorithm with the approximation ratio of $\OPT_k^c$, and then provide a bi-criteria approximation algorithm that achieve $1-\epsilon$ approximation ratio using at most $\mathcal{O}(k\log\frac{n}{\epsilon})$ centering nodes. We also propose algorithms for the $k$-center problem without the connection oracle.

\section{Preliminaries} \label{sec:prelim}

\subsection{Problem Definitions} \label{sec:prelim-IM}

An uncertain graph is represented by $G = (V, E)$ where $V$ is the set of nodes and $E$ is the set edges, with $|V|=n$ and $|E|=m$. We assume that each node in $V$ has a unique node ID in $[1,n]$. Each edge $e\in E$ is associated with a number $p(e)\in (0,1]$ denoting the probability that $e$ exists. For any two nodes $u$ and $v$ in $V$, we use $\mathrm{Pr}[u\sim v]$ to denote the probability that $u$ and $v$ is connected in $G$. For simplicity, we follow the work in~\cite{Ceccarello17} to assume that $G$ is an undirected graph, but our approach can be readily extended to the case of directed graphs, which will be explained later.

A \textit{$k$-clustering} of $G$ can be represented by a tuple $\mathcal{C}=\langle C, Q_1,Q_2,\cdots,Q_k\rangle$, where $C=\{c_1,\cdots,c_k\}$ is the set of \textit{centering nodes} and  $\{Q_1,Q_2,\cdots,Q_k\}$ is a partition of the nodes in $V$ satisfying $c_i\in Q_i$ for all $i\in [k]$. For any $i\in [k]$ and any $v\in Q_i$, we call the node pair $(c_i,v)$ as a \textit{cluster link} of $\mathcal{C}$. We call the set of all clustering links in $\mathcal{C}$ as the \textit{signature} of $\mathcal{C}$, and we use $\mathcal{S}_k^{G}$ to denote the set of signatures of all possible $k$-clusterings of $G$. Note that any two different $k$-clusterings must have different signatures, and we can construct a $k$-clustering from any $A\in \mathcal{S}_k^G$. Therefore, we will also call any $A\in \mathcal{S}_k^G$ as a $k$-clustering. Given any $A\in \mathcal{S}_k^G$, we define
\begin{eqnarray}
&&\mathit{KM}(A)={\sum\nolimits_{(u,v)\in A}{\mathrm{Pr}}[u\sim v]}/{n}\\
&&\mathit{KC}(A)=\min\nolimits_{(u,v)\in A}\mathrm{Pr}[u\sim v]
\end{eqnarray}
With the above definitions, the $k$-median and $k$-center problems can be formally defined as follows:

\begin{definition}
The $k$-median (KMD) problem aims to identify an optimal solution $A^o$ to the following optimization problem:
\begin{eqnarray}
        {\mathbf{Maximize}}&&~~\mathit{KM}(A) \qquad\qquad\qquad\qquad \mathbf{[KMD]}\nonumber\\
        \mathbf{s.t.}&&~~A\in \mathcal{S}_k^G \nonumber
\end{eqnarray}
The $k$-center (KCT) problem aims to identify an optimal solution $B^o$ to the following optimization problem:
\begin{eqnarray}
        {\mathbf{Maximize}}&&~~\mathit{KC}(B) \qquad\qquad\qquad\qquad \mathbf{[KCT]}\nonumber\\
        \mathbf{s.t.}&&~~B\in \mathcal{S}_k^G \nonumber
\end{eqnarray}
\end{definition}

%In other words, the $k$-median problem aims to find a $k$-clustering to maximize the average connection probabilities of all its cluster links, while the $k$-median problem aims to find a $k$-clustering to maximize the minimum connection probability of all its cluster links. It can be seen that these definitions are similar in spirit to the traditional $k$-median and $k$-center problems in graphs without uncertainty~\cite{}.

For convenience, we use $\mathrm{OPT}_{k}^m$ to denote $\mathit{KM}(A^o)$, and use $\mathrm{OPT}_{k}^c$ to denote $\mathit{KC}(B^o)$. %Table~\ref{tbl:prelim-notations} lists the notations that are frequently used in the remainder of the paper.

\section{Solving the $k$-median problem} \label{sec:frame}

%\subsection{Solutions for $k$-Median}
%
%and analyze the hardness of the $k$-median and $k$-center problems under this assumption. We also propose several new algorithms to address these two clustering problems. The main idea of our algorithms is that we can reformulate the $k$-median and $k$-center problems into some submodular optimization problems, and hence

\subsection{Hardness of the $k$-Median Problem}

The prior work~\cite{Ceccarello17} has conjectured that the $k$-median problem is NP-hard. We prove this conjecture in the following theorem, by a reduction from the NP-hard Dominating Set problem:

\begin{theorem}
The $k$-median problem is NP-hard, even if there exists an oracle for computing $\forall u,v\in V: \Pr[u\sim v]$.
\label{thm:hardnesskmedian}
\end{theorem}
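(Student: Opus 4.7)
The plan is to reduce from Dominating Set. Given an instance $(H,k)$ with $H=(V_H,E_H)$ and $n'=|V_H|$, I would take $V=V_H$, $E=E_H$, and assign every edge the same small probability $p=1/{n'}^{4}$; the $k$-median parameter is kept as $k$. The probability $p$ needs only $O(\log n')$ bits, so the construction is polynomial-time, and the promised oracle returns exact values of $\Pr[u\sim v]$.

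The central analytic step is to pin down $\Pr[u\sim v]$ tightly via a union bound over simple $u$-$v$ paths. Since there are at most ${n'}^{\ell-1}$ simple $u$-$v$ paths of length $\ell$, for $n'p<1$ I would obtain $\Pr[u\sim v]\leq\sum_{\ell\geq 1}{n'}^{\ell-1}p^{\ell}=p/(1-n'p)$ whenever a direct edge $(u,v)$ exists in $H$, and $\Pr[u\sim v]\leq\sum_{\ell\geq 2}{n'}^{\ell-1}p^{\ell}=n'p^{2}/(1-n'p)$ when $u\neq v$ are non-adjacent in $H$ (no length-$1$ term). Trivially $\Pr[u\sim v]\geq p$ for $H$-adjacent pairs. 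With $p=1/{n'}^{4}$ we have $n'p=1/{n'}^{3}$, so adjacent pairs satisfy $\Pr[u\sim v]\in[p,\,p+O(1/{n'}^{7})]$ while non-adjacent pairs satisfy $\Pr[u\sim v]\leq O(1/{n'}^{7})$.

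For any size-$k$ center set $C$, the optimal signature built on $C$ assigns each non-center to its best center, giving $n'\cdot \KM=k+\sum_{v\notin C}\max_{c\in C}\Pr[c\sim v]$. If $C$ is a dominating set of $H$, every $v\notin C$ has some adjacent $c\in C$ contributing at least $p$, so $n'\cdot \KM\geq k+(n'-k)p$. If $C$ is not dominating, some $v^{*}\notin C$ has no adjacent $c\in C$ and therefore contributes only $O(1/{n'}^{7})$, while every other non-center contributes at most $p+O(1/{n'}^{7})$; hence $n'\cdot \KM\leq k+(n'-k-1)(p+O(1/{n'}^{7}))+O(1/{n'}^{7})$. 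A direct calculation shows the dominating lower bound exceeds the non-dominating upper bound by $p-O(n'\cdot 1/{n'}^{7})=\Omega(1/{n'}^{4})$, so placing a threshold $\tau$ strictly inside this gap yields $\OPT_{k}^{m}\geq \tau$ if and only if $H$ admits a dominating set of size $k$, completing the polynomial-time reduction.

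The main obstacle is the probability analysis: $G$ may have exponentially many simple paths between a given pair of vertices (so $\Pr[u\sim v]$ cannot be evaluated explicitly, consistent with its \#P-hardness), yet the union bound must still be tight enough to separate the two regimes. The key move is that choosing $p$ small but still $O(\log n')$-bit encoded makes $\sum_{\ell}(n'p)^{\ell}$ converge geometrically, so the direct-edge contribution $p$ dominates the cumulative longer-path noise $O(n'p^{2})$ by an $\Omega({n'}^{3})$ factor, which is precisely what permits a clean threshold to separate dominating from non-dominating center sets via the $\KM$ objective.
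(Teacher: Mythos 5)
Your proposal is correct and follows essentially the same route as the paper: a reduction from Dominating Set in which every edge gets the same tiny probability, a union bound over path lengths showing non-adjacent pairs have connection probability $O(np^2)$ while adjacent pairs have probability roughly $p$, and a threshold on the $k$-median value separating dominating from non-dominating center sets. The only differences are cosmetic (you pick $p=1/n'^4$ and hide constants in $O(\cdot)$, whereas the paper uses $q=\frac{1}{n(n-k+2)}$ and writes the inequalities explicitly).
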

\begin{proof}
We prove the theorem by a reduction from the NP-hard \textit{dominating set} problem~\cite{Vazirani2001}. Given any undirected graph $G=(V,E)$ with $|V|=n$ and any integer $k$, the decision version of the dominating set problem asks whether there exists $S\subseteq V$ with $|S|= k$ such that each node in $V\backslash S$ is adjacent to certain node in $S$. Given such an instance $G$ of the dominating set problem, we can construct an uncertain graph by setting $p(e)=q=\frac{1}{n(n-k+2)}$ for each $e\in E$. Suppose that there exists a polynomial-time algorithm $\mathcal{A}_{opt}$ to optimally solve the $k$-median problem. So we can run it on the uncertain graph $G$ described above, and get an optimal $k$-clustering with its signature denoted by $\tilde{A}$. In the sequel, we will prove that: the graph $G$ has a dominating set $S$ satisfying $|S|= k$ if and only if $\mathit{KM}(\tilde{A})\geq k+(n-k)q$

%\textit{The graph $G'$ has a dominating set $S$ satisfying $|S|= k$ if and only if $\mathit{KM}(A'')\geq k+(n-k)q$.}

If the graph $G$ has a dominating set $S$ satisfying $|S|= k$, then we can use $S$ as the set of center nodes, and hence we must have $\mathit{KM}(\tilde{A})\geq k+(n-k)q$. Conversely, if $G$ does not have a dominating set $S$ satisfying $|S|= k$, then there must exist a cluster link $(u,v)$ in $\tilde{A}$ such that $u$ and $v$ are not adjacent, and hence we get
\begin{eqnarray}
\mathrm{Pr}[u\sim v]\leq nq^2+ n^2q^3 + n^3 q^4 +\cdots \leq \frac{nq^2}{1-nq},
\end{eqnarray}
where $n^iq^{i+1}$ is an upper bound for the probability that $u$ is connected to $v$ through $i+1$ hops. Moreover, for any $(u',v')\in \tilde{A}\backslash \{(u,v)\}$ satisfying $u'\neq v'$, we must have
\begin{eqnarray}
\mathrm{Pr}[u'\sim v']\leq q+{nq^2}/{(1-nq)}
\end{eqnarray}
Therefore, we have
\begin{eqnarray}
\mathit{KM}(\tilde{A})&\leq& k+ (n-k-1)\left(q+\frac{nq^2}{1-nq}\right)+\frac{nq^2}{1-nq}\nonumber\\
&<& k+(n-k)q
\end{eqnarray}
The above reasoning implies that, if $\mathcal{A}_{opt}$ exists, then the dominating set problem can also be optimally solved in polynomial time. Hence, the theorem follows.
\end{proof}

\subsection{$k$-Median Algorithms with an Oracle}

In this section, we assume that there exists a \textit{connectivity oracle}, i.e., $\mathrm{Pr}[u\sim v]$ can be computed in polynomial time for any $u\in V$ and $v\in V$.

It is highly non-trivial to find an approximation solution to the $k$-median problem, as it has a large searching space $\mathcal{S}_k^G$ with the cardinality of ${n\choose k}k^{n-k}$. However, we find that the $k$-median problem can be transformed into a submodular maximization problem with a much-reduced searching space, as described below.

For any $C\subseteq V$ and any $v\in V$, we define
\begin{eqnarray}
f_v(C)=\max\{\mathrm{Pr}[u\sim v]\mid u\in C\};~F(C)=\sum\limits_{v\in V}f_v(C) %~\mathit{KC}(C)=\mathit{KM}(D(C))
\end{eqnarray}

\begin{algorithm} [t]
%    \KwIn{\textcolor{black}{$\gamma, \epsilon$; $g(\cdot)$ is a submodular function}}%where $0<\sigma+\gamma<\alpha$}
%    \KwOut{A set $S\subseteq V$ and an interval $[a,b]\subseteq [1,n]$}
        $C\leftarrow \emptyset$\\
        \While{$|C|<k$}{
            Find $u\in V\backslash C$ such that $g(C\cup\{u\})-g(C)$ is maximized;\\
            $C\leftarrow C\cup\{u\}$
        }
  \Return{$C$}
  \caption{$\mathsf{Greedy}(G, k,g(\cdot))$}
    \label{alg:greedyalgorithm}
\end{algorithm}

It is noted that, for any $A\in \mathcal{S}_k^G$, there must exist certain $C\subseteq V$ such that $|C|=k$ and $F(C)\geq \mathit{KM}(A)$. Moreover, given any $C\subseteq V$, we can easily construct a $k$-clustering $A$ such that $C$ is the set of centering nodes in $A$ and $F(C)=\KM(A)$.  Therefore, the $k$-Median problem can be transformed into the following equivalent optimization problem:
  \begin{eqnarray}
        {\mathbf{Maximize}}&&~~\sum\nolimits_{v\in V}f_v(C) \qquad\qquad\qquad\qquad \mathbf{[KMD1]}\nonumber\\
        \mathbf{s.t.}&&~~|C|= k;~~C\subseteq V \nonumber
  \end{eqnarray}
Moreover, we find that the [KMD1] problem is actually a submodular maximization problem, as shown by the following theorem:

\begin{theorem}
For any $v\in V$, the function $f_v(\cdot)$ is a monotone and submodular function defined on $2^V$.
\label{thm:submodularitythm}
\end{theorem}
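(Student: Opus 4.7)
The plan is to verify monotonicity and submodularity of $f_v$ directly from its definition as a pointwise maximum of a fixed collection of constants $\{p_u : u \in V\}$, where $p_u := \Pr[u \sim v]$. Since the $p_u$'s do not depend on $C$, $f_v$ is essentially the function $C \mapsto \max_{u \in C} p_u$ (with the convention $f_v(\emptyset) = 0$), and submodularity/monotonicity of such ``max-of-weights'' set functions is a standard fact that I will derive by a two-case analysis.

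Monotonicity is immediate: for $C \subseteq D \subseteq V$, every term appearing in the max defining $f_v(C)$ also appears in the max defining $f_v(D)$, so $f_v(C) \le f_v(D)$. For submodularity, I need to show that for any $C \subseteq D \subseteq V$ and any $x \in V \setminus D$,
\begin{equation}
f_v(C \cup \{x\}) - f_v(C) \ \ge\ f_v(D \cup \{x\}) - f_v(D). \nonumber
\end{equation}
I split into two cases based on how $p_x$ compares with $f_v(D)$. In the first case, $p_x \le f_v(D)$: then $f_v(D \cup \{x\}) = f_v(D)$, so the right-hand marginal is $0$, while the left-hand marginal equals $\max\{0, p_x - f_v(C)\} \ge 0$ by monotonicity of $\max$. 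In the second case, $p_x > f_v(D)$: then $f_v(D \cup \{x\}) = p_x$ and $f_v(C \cup \{x\}) = p_x$ as well (since $p_x > f_v(D) \ge f_v(C)$), so the two marginals become $p_x - f_v(C)$ and $p_x - f_v(D)$, and monotonicity $f_v(C) \le f_v(D)$ gives the required inequality.

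I expect no real obstacle here: the statement is a textbook property of ``best-of'' or ``facility-location-style'' functions, and the only subtlety is the bookkeeping of the two cases above. A minor point to get right is the convention at $\emptyset$, which I fix as $f_v(\emptyset) = 0$ so that monotonicity holds trivially at the bottom of the lattice; this convention is consistent with the later use of $f_v$ in the greedy algorithm and in the objective $F(C) = \sum_v f_v(C)$, whose monotonicity and submodularity then follow immediately as a nonnegative sum of monotone submodular functions.
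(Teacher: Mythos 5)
Your proof is correct and follows essentially the same route as the paper's: identical monotonicity argument, and the same two-case analysis for submodularity based on comparing $\Pr[x\sim v]$ with the value of $f_v$ on the larger set (the only difference being where the equality case is placed, which is immaterial). The explicit convention $f_v(\emptyset)=0$ is a fine addition but not needed beyond what the paper does.
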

It is a well-known fact that monotone submodular maximization problems can be addressed by a greedy algorithm with a $1-1/e$ approximation ratio. Therefore, we can use a greedy algorithm (shown in Algorithm~\ref{alg:greedyalgorithm}) to find a $1-1/e$ approximation to [KMD1] and hence to the $k$-median problem. i.e.:

\begin{theorem}
Using the $\mathsf{Greedy}(G, k,F(\cdot))$ algorithm, we can find a solution with a $1-1/e$ approximation ratio ti the $k$-median problem.
\end{theorem}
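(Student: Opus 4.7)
The plan is to chain together the reduction to [KMD1], the monotone submodularity of $F(\cdot)$, and the classical Nemhauser--Wolsey--Fisher guarantee for the greedy algorithm on a cardinality constraint. Concretely, I would first invoke the reduction already established in the paper: for every $A\in\mathcal{S}_k^G$ with center set $C$ one has $\KM(A)=\frac{1}{n}\sum_{(u,v)\in A}\Pr[u\sim v]\le \frac{1}{n}F(C)$, and conversely, given any $C\subseteq V$ with $|C|=k$, one can assign each $v\in V$ to $\arg\max_{u\in C}\Pr[u\sim v]$ to obtain a clustering $A$ with $\KM(A)=F(C)/n$. In particular, $n\cdot\OPT_k^m=\max_{|C|=k}F(C)$, so it suffices to show that $\mathsf{Greedy}(G,k,F(\cdot))$ returns $C$ with $F(C)\ge(1-1/e)\max_{|C'|=k}F(C')$.

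Next I would argue monotone submodularity of $F$. Theorem \ref{thm:submodularitythm} already states that each $f_v$ is monotone and submodular on $2^V$; since nonnegative linear combinations of monotone submodular functions are monotone and submodular, $F(\cdot)=\sum_{v\in V}f_v(\cdot)$ inherits both properties, and $F(\emptyset)=0$.

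With monotone submodularity in hand, I would cite the standard Nemhauser--Wolsey--Fisher theorem: the greedy algorithm that starts with $C=\emptyset$ and repeatedly adds the element maximizing the marginal gain, subject to $|C|\le k$, returns a set whose value is at least $(1-1/e)$ times the optimum. Algorithm~\ref{alg:greedyalgorithm} is exactly this procedure applied to $F$, so the returned $C$ satisfies $F(C)\ge(1-1/e)\max_{|C'|=k}F(C')$. Translating back through the reduction, the induced clustering $A$ satisfies $\KM(A)=F(C)/n\ge(1-1/e)\OPT_k^m$, which is the claim.

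I do not anticipate a genuine obstacle here, since the submodularity of $f_v$ is supplied by the preceding theorem and the rest is bookkeeping. The only subtlety worth stating explicitly is the equivalence between [KMD] and [KMD1]: one must verify both directions (any clustering is bounded by an $F$-value of size $k$, and any size-$k$ set realizes a clustering attaining that $F$-value) so that the approximation ratio transfers losslessly, and one must note $F(\emptyset)=0$ and nonnegativity of the $f_v$'s so that the Nemhauser--Wolsey--Fisher hypotheses are all satisfied.
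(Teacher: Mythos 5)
Your proposal is correct and follows essentially the same route the paper takes: the reduction of [KMD] to [KMD1], the monotone submodularity of $F(\cdot)$ inherited from Theorem~\ref{thm:submodularitythm}, and the standard $1-1/e$ greedy guarantee for monotone submodular maximization under a cardinality constraint. Your version is in fact slightly more careful than the paper's (you track the $1/n$ normalization between $\KM(A)$ and $F(C)$ and check $F(\emptyset)=0$), but there is no substantive difference in approach.
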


Besides, we can also provide a $1/k$ approximation to the $k$-median problem. Consider the following problem:

\begin{eqnarray}
        {\mathbf{Maximize}}&&~~\sum_{u\in C}\sum_{v\in V}\Pr[u\sim v] \qquad\qquad\qquad \mathbf{[KMD2]}\nonumber\\
        \mathbf{s.t.}&&~~|C|= k;~~C\subseteq V \nonumber
\end{eqnarray}

We have the following theorem:

\begin{theorem}
Suppose that $C^{\dag}$ is an optimal solution to the [KMD2] problem. We have $F(C^{\dag})\geq (1/k)\KM(A^o)$.
\end{theorem}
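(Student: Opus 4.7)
The plan is to sandwich $F(C^{\dag})$ between the [KMD2] optimal value and $\KM(A^o)$, losing only a factor of $k$ in the process. Define the auxiliary function $T(C)=\sum_{u\in C}\sum_{v\in V}\Pr[u\sim v]$, so that $C^{\dag}$ is by definition a maximizer of $T$ over all size-$k$ subsets of $V$.

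First, I would establish two elementary inequalities relating $T$ and $F$ on any size-$k$ subset $C\subseteq V$. For a fixed $v\in V$, since the connection probabilities are nonnegative and there are exactly $k$ summands,
\begin{equation*}
\max_{u\in C}\Pr[u\sim v]\;\le\;\sum_{u\in C}\Pr[u\sim v]\;\le\;k\cdot\max_{u\in C}\Pr[u\sim v].
\end{equation*}
Summing over $v\in V$ and swapping the order of summation yields $F(C)\le T(C)\le k\cdot F(C)$ for every $C$ with $|C|=k$.

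Next, let $C^{\star}$ denote the set of centering nodes appearing in the optimal $k$-median clustering $A^o$. By the same argument the paper uses to transform [KMD] into [KMD1] (each $v$ being reassigned to its most connected center can only increase the objective), we have $F(C^{\star})\ge \KM(A^o)$. Chaining everything together,
\begin{equation*}
k\cdot F(C^{\dag})\;\ge\;T(C^{\dag})\;\ge\;T(C^{\star})\;\ge\;F(C^{\star})\;\ge\;\KM(A^o),
\end{equation*}
where the second inequality invokes the optimality of $C^{\dag}$ for [KMD2] and the third is the lower bound $T\ge F$ applied at $C^{\star}$. Dividing through by $k$ gives the claimed bound $F(C^{\dag})\ge (1/k)\KM(A^o)$.

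There is no real obstacle here: the entire proof rides on the trivial observation that a sum of $k$ nonnegative numbers lies within a factor of $k$ of their maximum, so the only thing to be careful about is correctly identifying which bound is used at $C^{\dag}$ (namely $T\le kF$) versus at $C^{\star}$ (namely $T\ge F$), so that the two losses compose into a single factor of $k$ rather than $k^2$.
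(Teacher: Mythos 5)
Your proof is correct and follows essentially the same route as the paper's: bounding the max by the average over the $k$ centers at $C^{\dag}$ (i.e.\ $T(C^{\dag})\le k\,F(C^{\dag})$), invoking the optimality of $C^{\dag}$ for [KMD2], and using $T(C^{\star})\ge F(C^{\star})\ge \KM(A^o)$ at the optimal centering set, exactly the chain the paper writes in a single displayed inequality. Introducing the auxiliary function $T$ and stating the two-sided sandwich $F\le T\le kF$ is merely a cleaner packaging of the same argument.
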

\begin{proof}
Suppose that $C^o_{km}$ is the set of centering nodes in $A^o$. We have
\begin{eqnarray}
&&F(C^{\dag})=\sum\nolimits_{v\in V}\max\{\mathrm{Pr}[u\sim v]\mid u\in C^{\dag}\} \nonumber\\
&\geq& \sum\nolimits_{v\in V}(1/k)\sum\nolimits_{u\in C^{\dag}} \Pr[u\sim v] \nonumber\\
&\geq& (1/k)\sum\nolimits_{v\in V}\sum\nolimits_{u\in C^o_{km}} \Pr[u\sim v] \nonumber\\
&\geq& (1/k)\KM(A^o) \nonumber
\end{eqnarray}
Hence, the theorem follows.
\end{proof}

\subsection{$k$-Median Algorithms without Oracle}

In this section, we consider a more practical setting where the connection oracle is absent. We will first provide a basic sampling algorithm to address the $k$-median problem, and then provide some more efficient algorithms.

\subsubsection{A Basic Sampling Algorithm for $k$-Median}

A random sample $R$ of $G$ is a graph generated by removing each edge $e$ in $G$ with the probability of $1-p(e)$. For any $u,v\in V$ and any random sample $R$ of $G$, let $X_R(u\sim v)=1$ when $u$ and $v$ is connected in $R$, and $X_R(u\sim v)=0$ when $u$ and $v$ is not connected in $R$. For any set $\mathcal{R}$ of random samples of $G$, define
\begin{eqnarray}
\wht{\mathrm{Pr}}}[{\mathcal{R},u\sim v]=\sum\nolimits_{R\in \mathcal{R}}X_R(u\sim v)/|\mathcal{R}|\nonumber\\
{\wht{\mathit{KM}}}({\mathcal{R}},A)={\sum\nolimits_{(u,v)\in A}\wht{\mathrm{Pr}}[{\mathcal{R}}, u\sim v]}/{n}\nonumber
\end{eqnarray}
It can be seen that $\wht{\mathrm{Pr}}[{\mathcal{R}},u\sim v]$ and $\wht{\KM}(\mathcal{R},A)$ are unbiased estimations of ${\mathrm{Pr}}[u\sim v]$ and $\KM(A)$, respectively. Similarly, for any $A\in \mathcal{S}_k$, any $C\subseteq V$ and any $v\in V$, we define
\begin{eqnarray}
&&\wht{f}_v(\mathcal{R},C)=\max\{\wht{\mathrm{Pr}}}[{\mathcal{R},u\sim v]\mid u\in C\}\nonumber\\
&&\wht{F}(\R,C)=\sum\nolimits_{v\in V}\wht{f}_v(\mathcal{R},C)\nonumber
\end{eqnarray}

%Given any node set $C\subseteq V$ and any set $\mathcal{R}$ of random samples of $G$, we can find a clustering of $G$ as follows. Suppose that the nodes in $C$ are $c_1,c_2,\cdots,c_l$. We regard the nodes in $C$ as the centers, and add each node $v\in V\backslash C$ into the cluster that $c_i$ lies in, such that the node ID of $c_i$ is minimized and $\wht{\mathrm{Pr}}[\mathcal{R},c_i\sim v]=\wht{f}_v(\mathcal{R},C)$. We use $\wht{D}(\mathcal{R},C)$ to denote the signature of such a clustering generated from $C$ and $\mathcal{R}$.

\begin{algorithm} [t]
%    \KwIn{\textcolor{black}{$\gamma, \epsilon$; $g(\cdot)$ is a submodular function}}%where $0<\sigma+\gamma<\alpha$}
%    \KwOut{A set $S\subseteq V$ and an interval $[a,b]\subseteq [1,n]$}
  $C^*\leftarrow \mathsf{Greedy}(G,k,\wht{F}(\R,\cdot));~~ A^*\leftarrow D(\mathcal{R},C^*)$\\
  \Return{$A^*$}
    \caption{$\mathsf{SearchKM}(G,k,\mathcal{R})$}
    \label{alg:searchkm}
\end{algorithm}

% For convenience, we abuse the notations a little by defining
%\begin{eqnarray}
%&&\forall C\subseteq V: \wht{\mathit{KM}}({\mathcal{R}},C)=\wht{\mathit{KM}}({\mathcal{R}},\wht{D}(\mathcal{R},C))\\
%&&\forall C\subseteq V: \wht{\mathit{KC}}({\mathcal{R}},C)=\wht{\mathit{KC}}({\mathcal{R}},\wht{D}(\mathcal{R},C))
%\end{eqnarray}

As computing the connection probabilities is NP-hard, it is hard to find the cluster links even if we know the set of centering nodes in an optimal solution. To bypass this difficulty, we create a mapping between $V_k=\{C|C\subseteq V\wedge |C|=k\}$ and $\mathcal{S}_k^G$ to reduce the number of generated samples for identifying the cluster links. More specifically, given any set $\R$ of random samples, each node set $C\in V_k$ is mapped to a unique $k$-clustering $D(\R,C)$ in $\mathcal{S}_k^G$, such that any cluster link $(c,v)\in D(\R,C)$ (with $c\in C$) that satisfies: 1) $\wht{\mathrm{Pr}}[\mathcal{R},c\sim v]=\wht{f}_v(\mathcal{R},C)$; 2) The node ID of $c$ is minimized under condition 1).

With the above definitions, we design the $\mathsf{SearchKM}$ algorithm to find an approximate solution. Given a set $\R$ of random samples, the $\mathsf{SearchKM}$ algorithm first calls the $\mathsf{Greedy}$ algorithm to find a set $C^*$ of center nodes, and then returns $A^*=D(\R,C^*)$ as an approximate solution. To ensure that $A^*$ has a good approximation ratio, we give the following theorem to determine a upper-bound for the number of generated samples:
\begin{theorem}
If $|\mathcal{R}|\geq T_{max}$ where
\begin{eqnarray}
T_{max}= \left\lceil \frac{2(2e-1)(e\epsilon+2e-1)}{3e^2\epsilon^2\mathrm{OPT}_{k}^m}\ln\frac{{n\choose k}+1}{\delta}\right\rceil=\mathcal{O}\left(\frac{n}{\epsilon^2}\log\frac{n}{\delta}\right),\nonumber
\end{eqnarray}
then the $\mathsf{SearchKM}(G,k,\mathcal{R})$ algorithm returns a $(1-1/e-\epsilon)$-approximate solution $A^*$ to the $k$-median problem with probability of at least $1-\delta$.
\label{thm:samplingubforkmedian}
\end{theorem}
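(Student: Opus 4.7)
The approach combines multiplicative Chernoff concentration of $\wht{KM}(\mathcal{R}, A)$ around $KM(A)$, a union bound over a carefully chosen family of $\binom{n}{k}+1$ deterministic clusterings, the $(1-1/e)$ approximation guarantee of $\mathsf{Greedy}$ applied to the empirical submodular surrogate $\wht{F}(\mathcal{R}, \cdot)$, and a final conversion from the empirical to the true objective.

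For any fixed $A \in \mathcal{S}_k^G$, I first observe that $\wht{KM}(\mathcal{R}, A) = \frac{1}{|\mathcal{R}|}\sum_{R \in \mathcal{R}} Y_R$ with $Y_R := \frac{1}{n}\sum_{(u,v) \in A} X_R(u \sim v) \in [0,1]$ i.i.d.\ of mean $KM(A)$; a two-sided multiplicative Chernoff bound then gives $\Pr[|\wht{KM}(\mathcal{R}, A) - KM(A)| > \epsilon_0 KM(A)] \le 2\exp(-c\,|\mathcal{R}|\epsilon_0^2 KM(A))$ for a universal constant $c > 0$. I then union-bound this deviation over the family $\mathcal{F} := \{A^o\} \cup \{A_C^{\mathrm{tr}} : C \in V_k\}$, where $V_k$ denotes the $k$-subsets of $V$ and $A_C^{\mathrm{tr}}$ is the $k$-clustering with centers $C$ and true-best assignments (so that $KM(A_C^{\mathrm{tr}}) = F(C)$, and in particular $F(C^o_{km}) = \OPT_k^m$ at the center set $C^o_{km}$ of $A^o$). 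Taking $\epsilon_0 = e\epsilon/(2e-1)$ and $|\mathcal{R}| \ge T_{max}$ drives each per-element failure probability below $\delta/(\binom{n}{k}+1)$, so on the resulting clean event $\mathcal{E}$ (of probability $\ge 1-\delta$) I have $\wht{KM}(\mathcal{R}, A^o) \ge (1-\epsilon_0)\OPT_k^m$ and $|\wht{KM}(\mathcal{R}, A_C^{\mathrm{tr}}) - F(C)| \le \epsilon_0 \OPT_k^m$ for every $C \in V_k$.

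Next, since the proof of Theorem~\ref{thm:submodularitythm} transfers verbatim with $\Pr[\cdot \sim \cdot]$ replaced by the (deterministic, given $\mathcal{R}$) numbers $\wht{\Pr}[\mathcal{R}, \cdot \sim \cdot]$, the function $\wht{F}(\mathcal{R}, \cdot)$ is monotone submodular, and so $\mathsf{Greedy}$ returns $C^*$ satisfying $\wht{F}(\mathcal{R}, C^*) \ge (1-1/e)\wht{F}(\mathcal{R}, C)$ for every $C \in V_k$. Specialising to $C = C^o_{km}$ and using $\wht{F}(\mathcal{R}, C^o_{km}) \ge \wht{KM}(\mathcal{R}, A^o)$ (because $A^o$ is one feasible assignment with centers $C^o_{km}$, while $D(\mathcal{R}, C^o_{km})$ is empirically optimal among such assignments), event $\mathcal{E}$ yields $\wht{KM}(\mathcal{R}, A^*) = \wht{F}(\mathcal{R}, C^*) \ge (1-1/e)(1-\epsilon_0)\OPT_k^m$.

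The main obstacle is translating this empirical lower bound into the target true bound $KM(A^*) \ge (1-1/e-\epsilon)\OPT_k^m$: because $A^* = D(\mathcal{R}, C^*)$ depends on $\mathcal{R}$, Chernoff cannot be invoked on $A^*$ directly, and a naive union bound over all of $\mathcal{S}_k^G$ would cost a prohibitive factor of $k^{n-k}$ rather than $\binom{n}{k}$. My plan is to sandwich $A^*$ against the deterministic $A_{C^*}^{\mathrm{tr}}$: empirical optimality of $A^*$ gives $\wht{KM}(\mathcal{R}, A^*) \ge \wht{KM}(\mathcal{R}, A_{C^*}^{\mathrm{tr}})$; true optimality of $A_{C^*}^{\mathrm{tr}}$ gives $F(C^*) = KM(A_{C^*}^{\mathrm{tr}}) \ge KM(A^*)$; and the two-sided Chernoff control of $\wht{KM}(\mathcal{R}, A_{C^*}^{\mathrm{tr}})$ around $F(C^*)$ from $\mathcal{E}$ closes the gap. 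Chaining these inequalities, together with the algebraic identity $(1-1/e)(1-\epsilon_0) - \epsilon_0 = 1 - 1/e - \epsilon$ at $\epsilon_0 = e\epsilon/(2e-1)$, recovers precisely the coefficient $\tfrac{2(2e-1)(e\epsilon+2e-1)}{3e^2}$ appearing in $T_{max}$ and completes the proof.
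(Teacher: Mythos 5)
Your first two steps match the paper's argument: the paper likewise uses the Greedy guarantee on the empirical objective together with the observation that some clustering with centers $C^o_{km}$ has empirical value at least $\wht{\KM}(\mathcal{R},A^o)$, and it uses exactly the absolute error $\beta=\frac{e}{2e-1}\epsilon\,\OPT_k^m$ with a union bound over ${n\choose k}+1$ events. The genuine gap is in your final conversion step. The theorem is about $\KM(A^*)$ for $A^*=D(\mathcal{R},C^*)$, i.e.\ the \emph{empirically}-best assignment, but your clean event $\mathcal{E}$ only controls the deterministic clusterings $A_C^{\mathrm{tr}}$ (true-best assignments) and $A^o$. Your three sandwich inequalities all point the wrong way: $\wht{\KM}(\mathcal{R},A^*)\geq \wht{\KM}(\mathcal{R},A_{C^*}^{\mathrm{tr}})$ lower-bounds the empirical value of $A^*$, and $\KM(A^*)\leq F(C^*)$ upper-bounds its true value, so chaining them with the concentration of $\wht{\KM}(\mathcal{R},A_{C^*}^{\mathrm{tr}})$ around $F(C^*)$ yields only $\KM(A^*)\leq \wht{\KM}(\mathcal{R},A^*)+\epsilon_0\OPT_k^m$, an upper bound, never the needed lower bound. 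And the needed lower bound does not follow from $\mathcal{E}$: nothing in $\mathcal{E}$ prevents the empirical estimates $\wht{\mathrm{Pr}}[\mathcal{R},u\sim v]$ of ``bad'' centers from being inflated for many nodes $v$ simultaneously, in which case $D(\mathcal{R},C^*)$ routes many nodes to bad centers, $\wht{\KM}(\mathcal{R},A^*)$ stays large while $\KM(A^*)$ collapses; $\mathcal{E}$ constrains only the $A_C^{\mathrm{tr}}$ and $A^o$, not these argmax assignments.

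The paper closes this step differently: it applies the one-sided deviation bound $\KM(A)\geq \wht{\KM}(\mathcal{R},A)-\beta$ to every member of the family $\mathcal{S}_{\mathcal{R}}=\{D(\mathcal{R},C)\mid C\subseteq V,\ |C|=k\}$ of empirically-induced clusterings (one per center set, hence at most ${n\choose k}$ of them), and the lower-tail bound $\wht{\KM}(\mathcal{R},A^o)\geq \KM(A^o)-\beta$ only for $A^o$; since $A^*\in\mathcal{S}_{\mathcal{R}}$, this immediately gives $\KM(A^*)\geq \wht{\KM}(\mathcal{R},A^*)-\beta$ and the chain closes exactly as you intend. So the repair is to union-bound over $\{D(\mathcal{R},C)\}$ rather than $\{A_C^{\mathrm{tr}}\}$ (or otherwise directly control the true value of the empirical argmax assignment). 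A secondary technical point: the purely multiplicative Chernoff bound you invoke, with exponent proportional to $\KM(A)$, does not drive the failure probability below $\delta/({n\choose k}+1)$ at the stated $|\mathcal{R}|$ for center sets with $F(C)\ll\OPT_k^m$; what is needed (and what Lemma~\ref{lma:concentrationbound} provides) is a one-sided bound for an absolute deviation $\beta$, whose exponent $\frac{3\beta^2|\mathcal{R}|}{2(\beta+\KM(A))}$ only gets stronger as $\KM(A)$ decreases.
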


%Note that the work in~\cite{} requires at least $\mathcal{O}(\frac{n^3}{\epsilon^2 k^3})\log\frac{n}{\delta}$ number of samples. First, the $\mathsf{SearchKM}$

\subsubsection{Accelerations}

The $\mathsf{SearchKM}$ algorithm can be further accelerated by leveraging the submodularity of the function $\wht{F}(\R,\cdot)$, as shown by the $\mathsf{SearchKM+}$ algorithm. The $\mathsf{SearchKM+}$ algorithm maintains a value $\UB(v)$ for each $v\in V$, which denotes an upper bound for the marginal gain of $v$ with respect to the currently selected node set $C^*$. Initially, $\mathsf{SearchKM+}$ calls $\mathsf{GetFirstNode}(\R)$ to calculate $\UB(v)=\wht{F}(\R,\{v\})$ for all $v\in V$, and then add $u^*=\arg\max_{u\in V}\UB(v)$ into $C^*$. After that, it sorts $V$ into the node list $W$ according to the non-increasing order of $\forall v\in V: \UB(v)$, and re-compute $\UB(v)$ only when necessary.

It can be seen that the idea of $\mathsf{SearchKM+}$ is similar in spirit to the ``lazy greedy'' algorithm proposed in~\cite{LeskovecKGFVG2007}. However, the lazy greedy algorithm has not considered the ``cold start'' problem, i.e., how to efficiently compute the upper bound of the marginal gain of any node in $V$ (i.e., $\UB(v)$) in the initialization phase. In our case, a naive approach for computing the initial value of $\UB(v)$ requires $\mathcal{O}(n)$ time for any $v\in V$, as we need to calculate $\wht{f}_u(\R,\{v\})$ for each $u\in V$. However, using the $\mathsf{GetFirstNode}$ procedure, we only need $\mathcal{O}(1)$ time to compute $\UB(v)$.

%\begin{lemma}
%The $\mathsf{GetFirstNode}$ procedure can correctly calculate
%\end{lemma}

The $\mathsf{SearchKM+}$ algorithm can be further accelerated by borrowing some ideas from the OPIM sampling framework proposed in~\cite{Tang2018}. The resulted algorithm is shown in Algorithm~\ref{alg:opimkm}. We can prove:

%We roughly explain the idea as follows. Instead of generating all $T_{max}$ random samples of $G$ in one batch, we can progressively generate  As this process is similar to

\begin{theorem}
With probability of at least $1-\delta$, the $\mathsf{SamplingKM}$ algorithm returns a $k$-clustering with a $1-1/e-\epsilon$ approximation ratio. The expected number of random samples generated in $\mathsf{SamplingKM}$ is at most $\mathcal{O}(\frac{1}{\epsilon^2\OPT_k^m}\ln\frac{1}{\delta})$.
\end{theorem}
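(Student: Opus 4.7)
The plan is to adapt the two-phase OPIM framework of~\cite{Tang2018} to the $k$-median setting. $\mathsf{SamplingKM}$ maintains two independent sample pools $\R_1$ and $\R_2$ whose sizes are doubled in each iteration, uses $\R_1$ to build a candidate solution, and uses $\R_2$ to certify its quality. Because $\wht{F}(\R_1,\cdot)$ is a sum-of-max of monotone indicators, it inherits monotonicity and submodularity by exactly the sample-wise version of Theorem~\ref{thm:submodularitythm}, so running $\mathsf{Greedy}$ on $\wht{F}(\R_1,\cdot)$ returns some $C^*_i$ with $\wht{F}(\R_1, C^*_i) \geq (1-1/e)\max_{|C|=k}\wht{F}(\R_1,C)$, and we take $A^*_i = D(\R_1, C^*_i)$ as the candidate at iteration $i$.

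The correctness argument rests on two Chernoff-style bounds. A lower-tail Chernoff applied to $\wht{\KM}(\R_2, A^*_i)$ produces a data-dependent lower estimate $\KM^-(A^*_i)$ on $\KM(A^*_i)$; crucially, because $A^*_i$ depends only on $\R_1$, the samples in $\R_2$ are independent of $A^*_i$ and no union over $\mathcal{S}_k^G$ is required. An upper-tail Chernoff applied to $\wht{F}(\R_1,\cdot)$, together with a union bound over the $\binom{n}{k}$ candidate $k$-subsets, yields an upper estimate $F^+$ on $\OPT_k^m$ derived from $\wht{F}(\R_1, C^*_i)/(1-1/e)$. The algorithm stops once $\KM^-(A^*_i)\geq (1-1/e-\epsilon) F^+$; distributing the failure probability $\delta$ across the $O(\log T_{\max})$ doublings by a union bound ensures that, with probability at least $1-\delta$, all bounds hold simultaneously, in which case $\KM(A^*_i)\geq \KM^-(A^*_i)\geq (1-1/e-\epsilon)F^+\geq (1-1/e-\epsilon)\OPT_k^m$.

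For the expected sample complexity, the crux is that verification only needs concentration for the single clustering $A^*_i$. A multiplicative Chernoff bound on a $[0,1]$-valued sample average with mean $\KM(A^*_i)\geq (1-1/e)\OPT_k^m$ shows that $|\R_2| = \Omega(\epsilon^{-2}(\OPT_k^m)^{-1}\log(1/\delta))$ samples are enough to make $\KM^-(A^*_i)$ tight enough that the stopping criterion triggers. The doubling schedule therefore halts within $O(\log(\epsilon^{-2}(\OPT_k^m)^{-1}\log(1/\delta)))$ iterations, and the expected total sample count is dominated by the size used in the final iteration, yielding $\mathcal{O}(\epsilon^{-2}(\OPT_k^m)^{-1}\log(1/\delta))$.

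The hard part is the decoupling argument that drives the sample bound: we must certify that $F^+$ is a legitimate upper bound on $\OPT_k^m$ while paying for the $\log\binom{n}{k}$-style union bound only through $\R_1$, not through the expected termination time. This is handled by the OPIM-style balanced accounting between selection and verification, in which $\R_1$'s growth buys the validity of the upper bound while $\R_2$'s growth drives the stopping rule; once $|\R_2|$ reaches the verification threshold above, $|\R_1|$ has, by the synchronized doubling, already reached a size sufficient to make $F^+$ within a $(1+O(\epsilon))$-factor of $\OPT_k^m$, closing the argument.
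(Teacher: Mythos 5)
Your overall architecture (two synchronized pools, $\R_1$ for greedy selection, $\R_2$ for certification, doubling until a ratio test passes, with the worst-case cap $T_{max}$ as fallback) matches Algorithm~\ref{alg:opimkm}, and your treatment of the lower bound is sound: since $A^*$ depends only on $\R_1$, a single lower-tail concentration bound on $\wht{\KM}(\R_2,A^*)$ per iteration suffices, with the failure probability split over the $i_{max}$ doublings. The genuine gap is in how you certify the upper bound on $\OPT_k^m$. You derive $F^+$ from an upper-tail bound on $\wht{F}(\R_1,\cdot)$ combined with a union bound over all ${n\choose k}$ candidate center sets. That is not what the algorithm's quantity $\mathrm{ub}(A^o)$ does, and it is fatal to the very bound you are proving: if validity of the stopping rule's upper bound requires $|\R_1|=\Omega\left(\frac{1}{\epsilon^2\OPT_k^m}\left(\ln{n\choose k}+\ln\frac{1}{\delta}\right)\right)$, then, because $|\R_1|=|\R_2|$ under the synchronized doubling and both pools count toward the total, the test cannot be shown to trigger before $\Omega\left(\frac{k\ln(n/k)}{\epsilon^2\OPT_k^m}\right)$ samples, not the claimed $\mathcal{O}\left(\frac{1}{\epsilon^2\OPT_k^m}\ln\frac{1}{\delta}\right)$. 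Your closing claim that ``$\R_1$'s growth buys the validity of the upper bound'' by the time $\R_2$ reaches the verification threshold is unsupported: with equal pool sizes the union-bound threshold exceeds the verification threshold by a factor of order $k\ln(n/k)$, so it has \emph{not} already been reached. The intended mechanism --- the whole point of borrowing OPIM --- is that no union bound over $\mathcal{S}_k^G$ is needed for the upper bound at all: the greedy guarantee gives $\wht{\KM}(\R_1,A^*)\geq(1-1/e)\,\wht{\KM}(\R_1,A^o)$ deterministically, and $A^o$ is one fixed (unknown) clustering, so a single lower-tail bound on $\wht{\KM}(\R_1,A^o)$ around its mean $\OPT_k^m$ (Lemma~\ref{lma:concentrationbound}), inverted, yields exactly the expression $\mathrm{ub}(A^o)$ in Algorithm~\ref{alg:opimkm} with only $a=\ln(3i_{max}/\delta)$. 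The ${n\choose k}$ union bound belongs only in $T_{max}$, i.e., in the fallback case $i=i_{max}$ covered by Theorem~\ref{thm:samplingubforkmedian}; it influences $i_{max}$ (and hence only logarithmic-in-logarithmic terms), not the per-iteration thresholds that drive the expected sample count.

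A second, smaller issue: in your sample-complexity argument you take $\KM(A^*_i)\geq(1-1/e)\OPT_k^m$ as given. For finite $\R_1$ this need not hold --- $A^*_i$ is greedy on the empirical objective, and its true quality is exactly what must be established before the ratio test can pass. A complete argument must show that once $|\R_1|=|\R_2|$ exceeds the stated threshold, with high probability $\mathrm{lb}(A^*)$ is close to $\KM(A^*)$, $\mathrm{ub}(A^o)$ is close to $\OPT_k^m$, \emph{and} $\KM(A^*)$ itself is close to $(1-1/e)\OPT_k^m$; the last of these is the delicate step (it is where any residual dependence on $n$ or ${n\choose k}$ could re-enter the expected sample bound), and your proposal leaves it unproved.
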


\begin{algorithm} [t]
%    \KwIn{\textcolor{black}{$\gamma, \epsilon$; $g(\cdot)$ is a submodular function}}%where $0<\sigma+\gamma<\alpha$}
%    \KwOut{A set $S\subseteq V$ and an interval $[a,b]\subseteq [1,n]$}
        $(u,W)\leftarrow \mathsf{GetFirstNode}(\mathcal{R});~~C^*\leftarrow \{u\}$\\
        \While{$|C^*|<k$}{
            $(u,W)\leftarrow \mathsf{GetNextNode}(W,\wht{F}(\mathcal{R},\cdot),C^*)$\\
            $C^*\leftarrow C^*\cup\{u\}$
        }
  \Return{$C^*$}
    \caption{$\mathsf{SearchKM+}(G,k,\mathcal{R})$}
    \label{alg:searchkm}
\end{algorithm}

\begin{algorithm} [t]
%   \KwIn{$\epsilon, \epsilon_1,\epsilon_2$; $g(\cdot)$ is a submodular function}%where $0<\sigma+\gamma<\alpha$}
%   \KwOut{A set $S\subseteq V$ and an interval $[a,b]\subseteq [1,n]$}
        \ForEach{$v\in V$}{
            $\mathit{UB}(v)\leftarrow$ the summation of the sizes of the connected components in $\mathcal{R}$ that contain $v$;
		}
        Sort $V$ into the node list $W$ according to the non-increasing order of $\mathit{UB}(v):v\in W$;\\
      Remove the first node $w_1$ from $W$;\\
    \Return{$(w_1,W)$}
    \caption{$\mathsf{GetFirstNode}(\mathcal{R})$}
    \label{alg:bisearch}
\end{algorithm}

\begin{algorithm} [t!]
%   \KwIn{$\epsilon, \epsilon_1,\epsilon_2$; $g(\cdot)$ is a submodular function}%where $0<\sigma+\gamma<\alpha$}
%   \KwOut{A set $S\subseteq V$ and an interval $[a,b]\subseteq [1,n]$}
        %$a_{max}\leftarrow -1$\\
%        $l\leftarrow |W|$;\\
        %$l\leftarrow |W|$\\
        \For{$i\leftarrow 1$ \KwTo $|W|$}{
            $\mathit{UB}(w_i)\leftarrow g(C\cup \{w_i\})-g(C)$\\
            \lIf{$\mathit{UB}(w_i)\geq \mathit{UB}(w_{i+1})$}{\textbf{break}}
		}
    Re-sort the nodes in $W$ according to the non-increasing order of $\mathit{UB}(v): v\in W$\\
    Remove the first node $w_1$ from $W$;\\
    \Return{$(w_1,W)$}
    \caption{$\mathsf{GetNextNode}(W,g(\cdot),C)$}
    \label{alg:bisearch}
\end{algorithm}

%
%\section{A Batched Sampling Algorithm} \label{sec:batch}
%
%In this section, we propose an efficient sampling algorithm for the selecting the seed nodes in each batch. We will first introduce the motivation and main idea of our algorithm, and then present the details of our algorithm.

\begin{algorithm}[ht!]
     \label{alg:opimkm}
	\caption{$\mathsf{SamplingKM}(G,k,\epsilon,\delta)$}
%	 \KwIn{Graph $G$, seed set size $k$, error threshold $\epsilon$, and failure probability threshold $\delta$}
%	 \KwOut{A $k$-size seed set $S^{*}$ that provides an approximation guarantee of at least $(1 - 1/e - \epsilon)$ with probability at least 1 - $\delta$ }
    $T_{max}\leftarrow \frac{2(2e-1)(e\epsilon+2e-1)n}{3e^2\epsilon^2 k}\ln\frac{{n\choose k}+1}{\delta};~~T\leftarrow T\cdot \epsilon^2 k/n$\\
	Generate two sets $\mathcal{R}_{1}$ and $\mathcal{R}_{2}$ of random samples of $G$, such that $|\mathcal{R}_{1}|$ = $|\mathcal{R}_{2}|$ = $T$\;
	$i_{max} \gets \lceil \log_{2} (T_{max}/T) \rceil $\;
	\For{$i \gets 1$ $\mathbf{to}$ $i_{max}$}{
		            $A^*\leftarrow \mathsf{SearchKM}(G,k,\mathcal{R}_1)$ \\
                    $a\gets \ln (3 i_{max}/\delta);~~\theta\leftarrow |\mathcal{R}_1|$\\
	 				$\mathrm{lb}(A^*)\gets \left(\sqrt{\wht{\KM}(\mathcal{R}_2, A^*)} - \sqrt{\frac{a}{6\theta}}\right)^{2} - \frac{a}{6\theta}$\\
                    $\mathrm{ub}(A^o)\gets  \left( \sqrt{\frac{\wht{\KM}(\mathcal{R}_1, A^*)}{1 - 1/e} + \frac{2a}{3\theta}} + \sqrt{\frac{a}{6\theta}} \right)^2 - \frac{a}{6\theta}$\\
%                    compute $\sigma^{l}(S^{*})$ and $\sigma^{u}(S^{o})$ by $(3)$ and $(4)$,respectively, setting $\delta_{1} = \delta_{2} = \delta/(3i_{max})$\;
%	  				$\alpha \gets \sigma^{l}(S^{*})/\sigma^{u}(S^{o})$\;
	  				\If{ $\mathrm{lb}(A^*)/\mathrm{ub}(A^o) \ge 1-1/e-\epsilon $ $\mathbf{or}$ $ i=i_{max}$}{$\mathbf{return}$ $A^{*}$}
	     			double the sizes of $\mathcal{R}_{1}$ and $\mathcal{R}_{2}$ with new random samples;
     	}
\end{algorithm}

\section{Solving the k-center problem}

In this section, we address the $k$-center problem both with a connection oracle assumption and without it.

\subsection{k-Center Algorithms with an Oracle}

\subsubsection{A Simple Algorithm}
\label{sec:asimplealgorithm}

The work in~\cite{Ceccarello17} has proved that the connection probabilities of any three nodes $u,v,w\in V$ must satisfy
\begin{eqnarray}
\Pr[u\sim w]\geq \Pr[u\sim v]\cdot \Pr[v\sim w]
\end{eqnarray}
Let $d(u,v)=-\ln \Pr[u\sim v]$ and $d_v(C)=\min_{u\in C}d(u,v)$, so we have
\begin{eqnarray}
d(u, w)\leq  d(u, v)+ d(v,w)
\end{eqnarray}
which implies that $d(\cdot)$ is a metric. Consider the following problem:
  \begin{eqnarray}
        {\mathbf{Minimize}}&&~~\max_{v\in V} d_v(C) \qquad\qquad\qquad\qquad \mathbf{[KCT0]}\nonumber\\
        \mathbf{s.t.}&&~~|C|= k;~~C\subseteq V \nonumber
  \end{eqnarray}
It can be seen that the set of centering nodes in $B^o$ is also an optimal solution to the [KCT0] problem. Note that [KCT0] is a metric $k$-center problem, so it can be addressed by a simple greedy algorithm with a 2 approximation ratio~\cite{williamson2011design}. More specifically, the greedy algorithm initializes by selecting an arbitrary node, and then iteratively selects a node which is furthest to the currently selected nodes until $k$ nodes are selected. With this greedy algorithm, we can find $B^*\in \mathcal{S}_k^G$ such that $-\ln \KC(B^*)\leq -2\ln \KC(B^o)$, which implies that
\begin{eqnarray}
\KC(B^*)\geq (\OPT_k^c)^2  \label{eqn:arofgreedykcenter}
\end{eqnarray}

\subsubsection{A Bi-Criteria Approximation Algorithm}
\label{sec:bisearchwithoracle}

Note that the approximation ratio proposed by~\eqref{eqn:arofgreedykcenter} can be arbitrarily bad, as $\OPT_k^c$ can be arbitrarily small. Therefore, we ask whether there exists an algorithm with a bounded approximation ratio for the $k$-center problem. Unfortunately, we find that:

%The prior work in~\cite{Ceccarello17} has proved the $k$-center problem is NP-hard when there exists a connectivity oracle. We make a step further to prove that it is even inapproximable
%within any ratio:

\begin{theorem}
Unless P=NP, no polynomial-time algorithm can find a solution to the $k$-center problem within any approximation ratio $\alpha>0$, even if there exists a connectivity oracle.
\label{thm:inapprofkcenter}
\end{theorem}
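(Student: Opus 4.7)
The plan is to reduce from the NP-hard dominating set problem, reusing the template of Theorem~\ref{thm:hardnesskmedian} but tuning the common edge probability to the target approximation factor $\alpha$. Given an instance $(G,k)$ of dominating set on $n$ vertices, I would construct an uncertain graph $G'$ with the same vertex and edge sets and set $p(e)=p$ on every edge, where $p\in(0,1]$ depends on $\alpha$ and $n$ and is pinned down below.

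I then contrast two cases. If $G$ has a dominating set $S$ of size $k$, taking $S$ as the centers makes every cluster link $(c,v)$ with $v\neq c$ an adjacent pair in $G$, so $\Pr[c\sim v]\geq p$ and therefore $\OPT_k^c\geq p$. If $G$ has no size-$k$ dominating set, then in any candidate signature some cluster link $(c,v)$ is non-adjacent, and any $c$--$v$ connection in a realized sample must use a path of length at least $2$. Applying the same union bound over simple paths as in the proof of Theorem~\ref{thm:hardnesskmedian} --- at most $n^{\ell-1}$ simple paths of length $\ell$, each active with probability $p^{\ell}$ --- yields $\Pr[c\sim v]\leq \sum_{\ell\geq 2} n^{\ell-1}p^{\ell} \leq \frac{np^{2}}{1-np}$ whenever $np<1$, so $\OPT_k^c\leq \frac{np^{2}}{1-np}$ in this case.

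It then suffices to pick $p$ strictly smaller than $\frac{\alpha}{n(1+\alpha)}$: this automatically gives $p\in(0,1]$ and $np<1$, and guarantees $\frac{np^{2}}{1-np}<\alpha p$. Under this choice, a hypothetical polynomial-time $\alpha$-approximation algorithm for the $k$-center problem must return a clustering with $\KC$ value at least $\alpha p$ if and only if $G$ has a dominating set of size $k$; comparing its output against the threshold $\alpha p$ therefore decides dominating set in polynomial time, forcing $\mathrm{P}=\mathrm{NP}$.

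The main obstacle I expect is the path-counting step itself, which must be applied uniformly over every potentially non-adjacent cluster link across all signatures in $\mathcal{S}_k^{G'}$, and it also must be checked that the chosen $p$ remains admissible (i.e., $p\in(0,1]$ and $np<1$) for every $\alpha\in(0,1]$ and every $n$ that can arise from dominating set instances. Both concerns are handled by taking $p=\Theta(\alpha/n)$, after which the rest of the reduction is purely combinatorial and parallels the $k$-median hardness argument already recorded in the paper.
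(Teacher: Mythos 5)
Your proposal is correct and follows essentially the same route as the paper's own proof: a reduction from dominating set with a uniform edge probability $p=\Theta(\alpha/n)$ (the paper fixes $q=\frac{\alpha}{2(1+\alpha)n}$, which satisfies your condition $p<\frac{\alpha}{n(1+\alpha)}$), the same path-counting bound $\Pr[u\sim v]\leq \frac{np^{2}}{1-np}<\alpha p$ for a non-adjacent cluster link, and the same threshold test at $\alpha p$ to decide dominating set. No gaps; the only difference is that you leave $p$ as any admissible value rather than pinning a specific constant.
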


As the $k$-center problem is NP-hard to approximate, we further ask the question whether there exists a \textit{bi-criteria approximation algorithm} for it, i.e., we permit such an algorithm to use more than $k$ center nodes, such that it can approach $\OPT_k^c$.
%the connection probabilities of its cluster links can be optimized to the maximum extent.
However, the following theorem reveals that, we cannot achieve a large connectivity probability unless we allow the usage of a ``sufficiently large'' number of centering nodes:

%that the number of the selected center nodes to be larger than $k$, such that we can find an approximate solution to the $k$-center problem with its connectivity performance close to $\mathrm{OPT}_{kc}$. However, the following theorem reveals that we cannot select ``too few'' center nodes to achieve such a goal:

\begin{theorem}
Unless P=NP, no algorithm can find a $l$-clustering $B$ in polynomial time, such that $\mathit{KC}(B)\geq \mathrm{OPT}_{k}^c$ and $l<k\ln n$.
\label{thm:kcboundedbylnn}
\end{theorem}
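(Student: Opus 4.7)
The plan is to reduce from the Dominating Set problem, exploiting its $\ln n$-inapproximability under $P\neq NP$ (Dinur--Steurer). The reduction reuses the uncertain-graph construction from the proof of Theorem~\ref{thm:hardnesskmedian}: given a Dominating Set instance $(G,k)$ on $n$ vertices, form an uncertain graph $G'$ on the same nodes and edges by assigning each edge the probability $q=\frac{1}{n(n-k+2)}$.

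First I would lift the two connectivity estimates already computed inside Theorem~\ref{thm:hardnesskmedian}: for any $u,v\in V$, if $uv\in E(G)$ then $\mathrm{Pr}[u\sim v]\geq q$, while if $u\neq v$ are non-adjacent in $G$ then $\mathrm{Pr}[u\sim v]\leq \frac{nq^2}{1-nq}<q$. These bounds immediately yield the key equivalence: a clustering $B$ of $G'$ satisfies $\KC(B)\geq q$ if and only if every cluster link $(c,v)\in B$ has $c=v$ or $cv\in E(G)$---equivalently, the centers of $B$ form a dominating set of $G$. In particular, whenever $G$ admits a dominating set of size $k$ we have $\OPT_k^c\geq q$ on $G'$.

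Suppose for contradiction that there is a polynomial-time algorithm $\mathcal{A}$ that, given $(G',k)$, returns an $l$-clustering $B$ with $\KC(B)\geq \OPT_k^c$ and $l<k\ln n$. Given any Dominating Set instance $G$, I would iterate over $k=1,2,\ldots,n$, build the corresponding $G'$, run $\mathcal{A}(G',k)$, and test whether the returned clustering has $\KC\geq q$. By the equivalence above, the smallest $k$ for which the test passes equals the minimum dominating-set size $k^{\star}$, and the centers of the returned $l$-clustering then form a dominating set of $G$ of size $l<k^{\star}\ln n$. This is a polynomial-time approximation of Dominating Set with ratio strictly smaller than $\ln n$, contradicting Dinur--Steurer.

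The main obstacle I anticipate is the narrow gap between the strict inequality $l/k^{\star}<\ln n$ produced by the reduction and the $(1-\epsilon)\ln n$ form in which the Dinur--Steurer hardness is usually stated. The cleanest resolution is a padding argument: apply the reduction to a family of instances in which $k^{\star}$ can be controlled (or, dually, amplify the gap by a suitable disjoint-union construction) so that the achieved ratio falls below $(1-\epsilon)\ln n$ for some constant $\epsilon>0$. Up to this standard technicality, the reduction delivers the claimed hardness.
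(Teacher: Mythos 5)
Your proposal is correct and takes essentially the same approach as the paper: the paper likewise reduces from Dominating Set by assigning every edge a small uniform probability (it uses $q=1/(3n)$ rather than the $k$-dependent $q$ you borrow from Theorem~\ref{thm:hardnesskmedian}), runs the hypothetical bi-criteria algorithm for every $k\in[n]$, uses the equivalence that $\KC(B)\geq q$ exactly when the centers dominate, and concludes a polynomial-time dominating-set approximation with ratio below $\ln n$, contradicting the known inapproximability bound (the paper cites Feige and does not address the $(1-\epsilon)\ln n$ gap that you carefully flag). One small slip worth noting: the smallest $k$ whose test passes need not equal $k^{\star}$, since for some $k<k^{\star}$ the algorithm may already return a dominating set using $l<k\ln n$ centers; this only makes the obtained dominating set smaller, so your conclusion is unaffected (the paper sidesteps the issue by returning the minimum-size dominating set among all the sets $C_k$ it collects).
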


%As the $k$-center problem is NP-hard to approximate, we try to find a bi-criteria approximation algorithm for it.

Based on Theorem~\ref{thm:kcboundedbylnn}, we propose a bi-criteria approximation algorithm with nearly tight approximation ratios. First, we re-formulate the [KCT] problem into the following [KCT1] problem:
  \begin{eqnarray}
        {\mathbf{Maximize}}&&~~\min_{v\in V} f_v(C) \qquad\qquad\qquad\qquad \mathbf{[KCT1]}\nonumber\\
        \mathbf{s.t.}&&~~|C|= k;~~C\subseteq V \nonumber
  \end{eqnarray}
It can be seen that, for any $k$-clustering $B\in \mathcal{S}_k^G$, we must have $\KC(B)\leq \min_{v\in V} f_v(C_B)$, where $C_B$ denotes the set of centering nodes in $B$. Therefore, the [KCT1] problem is equivalent to the [KCT] problem.

\begin{algorithm} [ht!]
%   \KwIn{$\epsilon, \epsilon_1,\epsilon_2$; $g(\cdot)$ is a submodular function}%where $0<\sigma+\gamma<\alpha$}
%   \KwOut{A set $S\subseteq V$ and an interval $[a,b]\subseteq [1,n]$}
		$[q_1,q_2]\leftarrow [0,1]; C^*\leftarrow \emptyset$ \label{ln:setab}\\
		\Repeat{$q_1\geq(1-\epsilon_2)q_2$}
		{
			 $C\leftarrow \emptyset;~~q\leftarrow \frac{q_1+q_2}{2}$\\
            $L(q,\cdot)\gets \sum_{v\in V}\min\{q,f_v(\cdot)\}$\\
            \Repeat{$L(q,C)\geq nq-\epsilon_1 q$}
                {
                    $v^*\leftarrow \arg\max_{u\in V\backslash C}[L(q,C\cup \{u\})-L(q,C)]$;\\
                    $C\leftarrow C\cup \{v^*\}$\\
                    \lIf{$|C|>\lceil\ln\frac{n}{\epsilon_1 } \rceil k$}{\textbf{break}}
                }
            %Find $C$ such that $L(q,C)\geq nq-\epsilon_1$ and $|C|$ is (approximately) minimized.\label{ln:callgrecoverinbisearch}\\
			\eIf{$|C| \leq \lceil\ln\frac{n}{\epsilon_1 } \rceil k$ \label{ln:judgeca}}
			{
				$C^*\leftarrow C;~~ q_1 \leftarrow q$ \label{ln:setS}
			}
			{
				$q_2 \leftarrow q$ \label{ln:setb}
			}
            %$q \leftarrow (q_1 + q_2)/2$
		}\label{ln:repeatends}
  \Return{$C^*$}
    \caption{$\mathsf{SearchKC}(\epsilon_1,\epsilon_2, f_v(\cdot))$}
    \label{alg:bisearch}
  \end{algorithm}

Recall that the $f_v(\cdot)$ is monotone and submodular for any $v\in V$. Therefore, the [KCT1] problem is similar to the ``robust submodular maximization'' problem studied in~\cite{krause2008robust}. However, the algorithms and performance bounds proposed in~\cite{krause2008robust} are only suitable for the case where the considered submodular function is integer-valued, while the function $f_v(\cdot)$ in our case is generally non-integral. Therefore, we adapt the algorithms proposed in~\cite{krause2008robust} to our case and prove new performance bounds, as described in the following.

Our algorithm is based on a ``potential function'' $L$ defined as follows:
\begin{eqnarray}
\forall q\in (0,1], \forall C\subseteq V: L(q,C)=\sum\nolimits_{v\in V}\min\{q,f_v(C)\}
\end{eqnarray}
As $f_v()$ is a submodular function, it can be verified that $L(q,\cdot)$ is also a submodular function for any $q\in (0,1]$. Moreover, the function $L$ has a remarkable property that it can be used to find an upper bound of $\OPT_k^c$, as clarified by the following lemma:
\begin{lemma}
Let $C=\mathsf{Greedy}(\lceil\ln\frac{n}{\epsilon_1 } \rceil k, L(q,\cdot))$. If $L(q,C)< nq-\epsilon_1q$, then $q$ must be an upper bound of $\OPT_k^c$.
\label{lma:judgegreedy}
\end{lemma}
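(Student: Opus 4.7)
The plan is to argue the contrapositive: I will show that if $q \le \OPT_k^c$, then the greedy algorithm run for $\lceil \ln(n/\epsilon_1) \rceil k$ iterations on $L(q,\cdot)$ must produce a set $C$ with $L(q,C) \ge nq - \epsilon_1 q$. The proof rests on three ingredients: (i) $L(q,\cdot)$ is a monotone submodular function capped at $nq$, (ii) whenever $q \le \OPT_k^c$ there is a $k$-subset on which $L(q,\cdot)$ attains its maximum value $nq$, and (iii) the classical Nemhauser--Wolsey--Fisher guarantee for greedy on monotone submodular functions with a cardinality constraint.

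First I would record the structural properties of $L(q,\cdot)$. Since each $f_v(\cdot)$ is monotone submodular (by Theorem~\ref{thm:submodularitythm}) and the scalar map $x\mapsto \min\{q,x\}$ is non-decreasing and concave, each summand $\min\{q, f_v(\cdot)\}$ is monotone submodular, and summing over $v\in V$ preserves both properties. In addition, $L(q,\emptyset)=0$ and $L(q,C)\le nq$ for every $C\subseteq V$.

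Next I would exhibit a good reference solution when $q\le \OPT_k^c$. Let $C_{B^o}$ denote the centers of the optimal $k$-center solution $B^o$. By definition of $\OPT_k^c$, each node $v\in V$ has some center $c\in C_{B^o}$ with $\Pr[c\sim v]\ge \OPT_k^c \ge q$, so $f_v(C_{B^o})\ge q$ and therefore $\min\{q, f_v(C_{B^o})\}=q$. Summing over $v$ yields $L(q,C_{B^o})=nq$, which is the maximum possible value. Hence the optimum of $L(q,\cdot)$ over $k$-subsets equals $nq$.

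Finally I would invoke the standard greedy guarantee: after $t$ iterations on a monotone submodular function with cardinality-$k$ optimum $\mathrm{OPT}$, the greedy value is at least $\bigl(1-(1-1/k)^t\bigr)\mathrm{OPT}$. With $t=\lceil \ln(n/\epsilon_1)\rceil k$, we have $(1-1/k)^t\le e^{-t/k}\le e^{-\ln(n/\epsilon_1)}=\epsilon_1/n$, so
\begin{equation*}
L(q,C) \;\ge\; \bigl(1-\epsilon_1/n\bigr)\cdot nq \;=\; nq-\epsilon_1 q.
\end{equation*}
This contradicts the hypothesis $L(q,C)<nq-\epsilon_1 q$, forcing $q>\OPT_k^c$, as required. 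The argument is essentially a clean assembly of known pieces; the only step that requires care is justifying that capping by $q$ and summing preserves submodularity, and then matching the constant $\lceil \ln(n/\epsilon_1)\rceil k$ to the slack $\epsilon_1 q$ in the greedy guarantee, which I expect to be the most delicate (though still routine) part of the write-up.
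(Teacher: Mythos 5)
Your proof is correct, and it reaches the lemma by a somewhat different route than the paper's. Both arguments prove the contrapositive and share the key observation that when $q\le \OPT_k^c$ the optimal centers $C^o_{kc}$ satisfy $f_v(C^o_{kc})\ge q$ for every $v$, hence $L(q,C^o_{kc})=nq$; the difference lies in which greedy guarantee is invoked afterwards. The paper recasts the question as a submodular set cover instance (minimize $|C|$ subject to $L(q,C)\ge nq$), notes that $C^o_{kc}$ is a feasible cover of size $k$, applies the bicriteria greedy-cover bound of Goyal et al.\ to obtain a set $C'$ with at most $\lceil\ln(n/\epsilon_1)\rceil k$ elements and $L(q,C')\ge nq-\epsilon_1 q$, and then uses the fact that $C'$ is a prefix of (hence contained in) the greedy output $C$ to conclude by monotonicity. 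You instead invoke the extended Nemhauser--Wolsey--Fisher maximization bound $L(q,C)\ge \bigl(1-(1-1/k)^t\bigr)nq$ for a greedy run of $t=\lceil\ln(n/\epsilon_1)\rceil k$ steps against the size-$k$ optimum, and the computation $(1-1/k)^t\le \epsilon_1/n$ gives $L(q,C)\ge nq-\epsilon_1 q$ directly. Your version is more self-contained: it bypasses the cover formulation and the prefix-containment step, and it explicitly justifies the submodularity of $L(q,\cdot)$ via the concave, non-decreasing cap of a monotone submodular function, which the paper merely asserts. The paper's cover framing, in exchange, makes explicit the ``minimum number of centers needed'' viewpoint that matches the bi-criteria statements built on this lemma. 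One cosmetic remark: you conclude $q>\OPT_k^c$, which is slightly stronger than, and in particular implies, the stated conclusion $\OPT_k^c\le q$.
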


Note that $L(q,C)\geq nq-\epsilon_1 q$ implies that $\min_{v\in V} f_v(C)\geq (1-\epsilon_1) q$. So Lemma~\ref{lma:judgegreedy} actually tells that, if $q<\OPT_k^c$, then we can use function $L$ and the $\mathsf{Greedy}$ procedure to find a clustering $B$ with at most $\lceil\ln\frac{n}{\epsilon_1 } \rceil k$ centering nodes such that $\wht{KC}(B)\geq (1-\epsilon_1)q$. Conversely, if such a clustering cannot be obtained, then we must have $\OPT_k^c\leq q$.

With Lemma~\ref{lma:judgegreedy}, we can use a binary searching process to find an approximate solution to [KCT1], as shown by the $\mathsf{SearchKC}$ algorithm. In the $\mathsf{SearchKC}$ algorithm, we maintain a searching interval $[q_1,q_2]$ (initialized to $[0,1]$), and use Lemma~\ref{lma:judgegreedy} to judge whether $q=\frac{q_1+q_2}{2}$ is an upper bound of $\OPT_k^c$. If we find that $\OPT_k^c\leq q$, then we halves $[q_1,q_2]$ by setting $q_2=q$. Otherwise, we also halves $[q_1,q_2]$ by setting $q_1=q$. As such, we always have
\begin{eqnarray}
\mathrm{OPT}_k^c \leq q_2;~|C^*|\leq \lceil\ln\frac{n}{\epsilon_1 } \rceil k;~\min_{v\in V} f_v(C)\geq (1-\epsilon_1) q_1
\end{eqnarray}
throughout the binary searching process, where the last inequality is due to $L(q_1,C)\geq nq-\epsilon_1 q$. Note that the binary searching process stops when $q_1\geq (1-\epsilon_2)q_2$. So we immediately get the following theorem:

\begin{theorem}
For any $\epsilon, \epsilon_1,\epsilon_2\in (0,1)$ satisfying $1-\epsilon=(1-\epsilon_1)(1-\epsilon_2)$, the $\mathsf{SearchKC}$ algorithm can find a solution that achieves $(1-\epsilon)\OPT_k^c$, using at most $ \left\lceil \ln\frac{n}{\epsilon_1}\right\rceil k$ centers. This algorithm has no more than $\lceil\log_2 \frac{1}{\epsilon_2\cdot \mathrm{OPT}_k^c}\rceil$ iterations. %The time complexity of $\mathsf{SearchKC}$ is at most $\mathcal{O}({k n^2}\log\frac{n}{\epsilon_1}\log \frac{1}{\epsilon_2\cdot {\mathrm{OPT}}_k^c})$.
\end{theorem}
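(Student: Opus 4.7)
The plan is to argue by loop invariants of the outer \textsf{Repeat} in $\mathsf{SearchKC}$. Specifically, I claim that at the top of every iteration three statements hold: $(\mathrm{I1})$ $\OPT_k^c \leq q_2$; $(\mathrm{I2})$ $|C^*| \leq \lceil \ln(n/\epsilon_1)\rceil k$; and $(\mathrm{I3})$ $\min_{v \in V} f_v(C^*) \geq (1-\epsilon_1)q_1$. Initially all three hold: $q_1=0$, $q_2=1$, $C^*=\emptyset$, so $(\mathrm{I1})$ is $\OPT_k^c \leq 1$, $(\mathrm{I2})$ is $0 \leq \lceil\ln(n/\epsilon_1)\rceil k$, and $(\mathrm{I3})$ reads $0 \geq 0$.

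For the inductive step I would inspect the exit route of the inner \textsf{Repeat}. If it exits through its until-condition, then $|C| \leq \lceil\ln(n/\epsilon_1)\rceil k$ and $L(q,C) \geq nq - \epsilon_1 q$. Because every summand of $L(q,C)=\sum_{v\in V}\min\{q,f_v(C)\}$ is at most $q$, any $v$ with $f_v(C) < (1-\epsilon_1)q$ would force the deficit $nq - L(q,C)$ to exceed $\epsilon_1 q$, a contradiction; hence $\min_v f_v(C) \geq (1-\epsilon_1)q$, and the updates $C^* \leftarrow C$, $q_1 \leftarrow q$ preserve $(\mathrm{I2})$ and $(\mathrm{I3})$ while leaving $q_2$ and $(\mathrm{I1})$ untouched. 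If instead the inner loop exits through the $|C| > \lceil\ln(n/\epsilon_1)\rceil k$ break, then immediately before the break the greedy set had exactly $\lceil\ln(n/\epsilon_1)\rceil k$ elements and failed the until check, i.e.\ $L(q,\cdot) < nq - \epsilon_1 q$ for that set; Lemma~\ref{lma:judgegreedy} then yields $q \geq \OPT_k^c$, so $q_2 \leftarrow q$ preserves $(\mathrm{I1})$ and $C^*$ is unchanged.

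At termination, the stopping rule $q_1 \geq (1-\epsilon_2)q_2$ together with $(\mathrm{I1})$ gives $q_1 \geq (1-\epsilon_2)\OPT_k^c$, and $(\mathrm{I3})$ then yields $\min_v f_v(C^*) \geq (1-\epsilon_1)(1-\epsilon_2)\OPT_k^c = (1-\epsilon)\OPT_k^c$. Turning $C^*$ into a signature $B$ by assigning each $v \in V$ to a center in $C^*$ maximizing $\Pr[u \sim v]$ (with ties broken as in the $D(\cdot,\cdot)$ map) gives $\KC(B) = \min_v f_v(C^*) \geq (1-\epsilon)\OPT_k^c$, while $(\mathrm{I2})$ caps the number of centers at $\lceil \ln(n/\epsilon_1)\rceil k$, which is the claimed bi-criteria guarantee.

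For the iteration bound, each outer iteration replaces either $q_1$ or $q_2$ by the midpoint $(q_1+q_2)/2$, so the length of $[q_1,q_2]$ is exactly halved; starting from length $1$, after $i$ iterations it equals $2^{-i}$. The loop can continue only while $q_2 - q_1 > \epsilon_2 q_2 \geq \epsilon_2 \OPT_k^c$ (using $(\mathrm{I1})$), so as soon as $2^{-i} \leq \epsilon_2 \OPT_k^c$ the loop must halt, giving at most $\lceil \log_2 (1/(\epsilon_2 \OPT_k^c))\rceil$ iterations. The main subtlety I expect is invariant $(\mathrm{I3})$: the deficit argument must be stated tightly, and the break-branch must be aligned precisely with the hypothesis of Lemma~\ref{lma:judgegreedy} so that it applies verbatim to a greedy output of the prescribed size.
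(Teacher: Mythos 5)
Your proof is correct and follows essentially the same route as the paper: the paper establishes exactly the three invariants you state ($\mathrm{OPT}_k^c \leq q_2$, $|C^*| \leq \lceil\ln(n/\epsilon_1)\rceil k$, $\min_{v\in V} f_v(C^*) \geq (1-\epsilon_1)q_1$), justifies the $q_2$-update branch via Lemma~\ref{lma:judgegreedy} and the $q_1$-update branch via the observation that $L(q,C)\geq nq-\epsilon_1 q$ forces $\min_v f_v(C)\geq(1-\epsilon_1)q$, and then reads the guarantee off the stopping rule $q_1\geq(1-\epsilon_2)q_2$ together with the halving of the search interval. Your write-up merely spells out the induction and the break-branch alignment with the lemma more explicitly than the paper does.
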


\subsection{$k$-Center Algorithms without an Oracle}

\subsubsection{Approximation Algorithm using Sampling}

In this section, we study whether the algorithm suggested in Sec.~\ref{sec:asimplealgorithm} can be implemented without a connection oracle. Define
\begin{eqnarray}
\wht{d}({\R},u,v)= -\ln \wht{\Pr}[\R,u\sim v];~\wht{d}(\R,v,C)=\min_{u\in C}\wht{d}({\R},u,v)\nonumber
\end{eqnarray}
With this definition, we propose an approximation algorithm as follows:

\begin{algorithm} [ht!]
%   \KwIn{$\epsilon, \epsilon_1,\epsilon_2$; $g(\cdot)$ is a submodular function}%where $0<\sigma+\gamma<\alpha$}
%   \KwOut{A set $S\subseteq V$ and an interval $[a,b]\subseteq [1,n]$}
		Select an arbitrary node $v\in V$ and add it into $C^*$\\
		\While{$|C^*|< k$}
		{
            Find $v^*\in V\backslash C^*$ such that $\wht{d}(\R,v^*,C^*)$ is maximized;\\
            $C\leftarrow C\cup \{v^*\}$\\
            %$q \leftarrow (q_1 + q_2)/2$
		}\label{ln:repeatends}
    $B^*\gets D(\R,C^*)$\\
  \Return{$C^*,B^*$}
    \caption{$\mathsf{SearchKC\_1}(G, k, \R)$}
    \label{alg:searchkc1}
  \end{algorithm}

Next, we study the problem of how to determine the cardinality of $\R$ such that the $k$-clustering $B^*$ returned by the $\mathsf{SearchKC\_1}$ algorithm can achieve a good approximation ratio. We give the following theorem:

\begin{theorem}
Given any $\epsilon,\epsilon_1,\epsilon_2,\delta\in (0,1)$ satisfying $\epsilon=\epsilon_1+\epsilon_2$, and given any set $\R$ of random samples of $G$ satisfying
\begin{eqnarray}
|\R|\geq \max\left\{ \frac{2(1+\epsilon_1)}{3\epsilon_1^2(\OPT_k^c)^2}\ln\frac{n(n-1)}{\delta}, \frac{2(1-\epsilon_1)}{3\epsilon_2^2(\OPT_k^c)^2}\ln\frac{n(n-1)}{\delta} \right\},\nonumber
\end{eqnarray}
the $\mathsf{SearchKC\_1}(G, k, \R)$ can return a $k$-clustering $B^*$ satisfying $\KC(B^*)\geq (1-\epsilon)(\OPT_k^c)^2$ with probability of at least $1-\delta$.
\label{thm:samplenum1}
\end{theorem}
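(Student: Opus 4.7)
The plan is to use Chernoff concentration to show $\wht{d}$ approximates the true metric $d$ on the pairs that actually matter, and then mimic the classical farthest-first $2$-approximation proof for metric $k$-center while paying the additive error at each step. First, for every ordered pair $(u,v)$ with $u\neq v$, the estimator $\wht{\Pr}[\R,u\sim v]$ is the empirical mean of $|\R|$ i.i.d.\ Bernoulli$(\Pr[u\sim v])$ trials $X_R(u\sim v)$. I would apply the multiplicative Chernoff bound with upper-tail parameter $\epsilon_1$ and lower-tail parameter $\epsilon_2$, calibrating the sample size so that the per-pair failure probability is at most $\delta/(n(n-1))$; the two expressions inside the $\max$ in the hypothesis correspond exactly to the upper and lower Chernoff tails at multiplicative level $(\OPT_k^c)^2$. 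A union bound over the $n(n-1)$ ordered pairs then yields, with probability $\ge 1-\delta$, the concentration event
\[
(1-\epsilon_2)\Pr[u\sim v]\;\le\;\wht{\Pr}[\R,u\sim v]\;\le\;(1+\epsilon_1)\Pr[u\sim v]
\]
for every pair whose true probability is at least $(\OPT_k^c)^2$. Pairs with $\Pr[u\sim v]<(\OPT_k^c)^2$ may deviate wildly; the analysis is designed not to touch them.

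Taking negative logarithms converts this into an additive guarantee on $d=-\ln\Pr$: on the concentration event, $|\wht{d}(\R,u,v)-d(u,v)|\le\gamma$ uniformly on relevant pairs, where $\gamma=\max\{-\ln(1-\epsilon_2),\ln(1+\epsilon_1)\}=O(\epsilon_1+\epsilon_2)$. Because $d$ is a genuine metric (by the inequality $\Pr[u\sim w]\ge\Pr[u\sim v]\Pr[v\sim w]$ recalled in Section~\ref{sec:asimplealgorithm}), I can then replay the textbook farthest-first analysis using the following routing. Let $C^*=\{c_1^*,\dots,c_k^*\}$ be the centers picked by $\mathsf{SearchKC\_1}$, let $v^\dagger\in V$ maximize $\wht{d}(\R,v,C^*)$, and let $C^o$ denote the optimal centers, so $\max_v d(v,C^o)\le -\ln\OPT_k^c$. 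By pigeonhole, two of the $k+1$ points $\{c_1^*,\dots,c_k^*,v^\dagger\}$ lie in the same optimal cluster, say $a,b$ with a common optimal center $c^o$; then by the \emph{true} triangle inequality $d(a,b)\le d(a,c^o)+d(c^o,b)\le -2\ln\OPT_k^c$. Converting back to $\wht{d}$ via Step~2 and using the farthest-first inequality $\wht{d}(\R,v,C^*)\le \wht{d}(\R,a,b)$ gives a bound $\wht{d}(\R,v,C^*)\le -2\ln\OPT_k^c+\gamma$ for every $v$. Finally, because the clustering $B^*=D(\R,C^*)$ assigns each $v$ to its $\wht{\Pr}$-nearest center $c(v)\in C^*$, converting forward one more time through Step~2 yields $d(c(v),v)\le -2\ln\OPT_k^c+2\gamma$, and exponentiating with $\epsilon=\epsilon_1+\epsilon_2$ and standard first-order estimates gives $\KC(B^*)\ge(1-\epsilon)(\OPT_k^c)^2$.

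The main obstacle is that $\wht{d}$ is \emph{not} a metric in general: a two-sample counterexample with $X_{R_1}(u\sim v)=X_{R_2}(v\sim w)=1$ and $X_R(u\sim w)=0$ for $R\in\{R_1,R_2\}$ already shows $\wht{\Pr}[\R,u\sim w]=0$ while $\wht{\Pr}[\R,u\sim v]\cdot\wht{\Pr}[\R,v\sim w]=1/4$, so $\wht{d}$ violates the triangle inequality. Consequently, I cannot directly invoke the metric $2$-approximation guarantee on $\wht{d}$; every use of the pigeonhole-plus-triangle step must be routed through the true metric $d$, with each switch between $\wht{d}$ and $d$ paying an $O(\epsilon_1+\epsilon_2)$ additive term from the Step~2 bound. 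Keeping these two error contributions balanced—so that the final multiplicative loss is exactly $(1-\epsilon)$ with $\epsilon=\epsilon_1+\epsilon_2$—is where the bookkeeping is most delicate, and is precisely what pins down the separate roles of $\epsilon_1$ (upper tail) and $\epsilon_2$ (lower tail) in the two branches of the $\max$ inside the sample-size hypothesis.
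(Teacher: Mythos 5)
Your overall strategy (farthest-first on the empirical distance, pigeonhole over $k+1$ points, routing the triangle inequality through the true metric $d$ because $\wht{d}$ is not a metric, then a union bound over pairs) is the same as the paper's, and the first half of your argument--showing that every node $v$ has some center $c(v)\in C^*$ with $\wht{\Pr}[\R,c(v)\sim v]$ above roughly $(1-\epsilon)(\OPT_k^c)^2$, using that the two colliding points $a,b$ in one optimal cluster satisfy $\Pr[a\sim b]\geq(\OPT_k^c)^2$--is sound. The genuine gap is in your last step. Your concentration event is stated only for pairs whose \emph{true} probability is at least $(\OPT_k^c)^2$, and you explicitly declare that pairs below that level ``may deviate wildly'' and will not be touched. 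But the final conversion, from $\wht{d}(\R,c(v),v)\leq -2\ln\OPT_k^c+\gamma$ to $d(c(v),v)\leq -2\ln\OPT_k^c+2\gamma$, applies the concentration bound to the pair $(c(v),v)$, which is chosen by the algorithm from the \emph{estimates}; nothing guarantees its true probability is at least $(\OPT_k^c)^2$, and that true probability is exactly the quantity you are trying to lower bound. A pair with tiny true connection probability can have a spuriously large empirical estimate, your event does not exclude this, so the step is circular as written.

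The paper closes precisely this hole with a second event (its $\mathcal{E}_2$): for \emph{every} pair with true probability below $(1-\epsilon)(\OPT_k^c)^2$, the estimate stays below the threshold $(1-\epsilon_1)(\OPT_k^c)^2$. Then an estimated value above the threshold certifies $\Pr[u\sim v]\geq(1-\epsilon)(\OPT_k^c)^2$ for every cluster link of $B^*$, which is the missing implication. This also corrects your reading of the two branches of the $\max$ in the sample-size hypothesis: the first branch (with $\epsilon_1$) pays for the lower tail on high-probability pairs ($\mathcal{E}_1$), and the second (with $\epsilon_2$) pays for the upper-tail gap of width $\epsilon_2(\OPT_k^c)^2$ between $(1-\epsilon)(\OPT_k^c)^2$ and $(1-\epsilon_1)(\OPT_k^c)^2$ needed for $\mathcal{E}_2$, rather than being the upper and lower multiplicative tails on the same (high-probability) pairs. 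Your proof becomes correct once you add such an event for the low-probability pairs and rebalance $\epsilon_1,\epsilon_2$ accordingly; without it the claimed $(1-\epsilon)(\OPT_k^c)^2$ guarantee on $\KC(B^*)$ does not follow.
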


As $\OPT_k^c$ is unknown, we present an algorithm that iteratively ``guesses'' $\OPT_k^c$ until a good solution is found, as shown by Algorithm~\ref{alg:samplingkc1}.

\begin{algorithm} [t]
   \KwIn{$\epsilon=\epsilon_1+\epsilon_2$}%where $0<\sigma+\gamma<\alpha$}
%   \KwOut{A set $S\subseteq V$ and an interval $[a,b]\subseteq [1,n]$}
		$\mathcal{R}\leftarrow \emptyset;\mathcal{R}'\leftarrow \emptyset;i\gets 0$ \label{ln:setab}\\
		\Repeat{$z(u,v)\geq q$}
		{
            $i\gets i+1;~q\leftarrow 2^{-i};~\delta\gets \frac{6\delta}{\pi^2 i^2};~\R\gets \R\cup \R'$\\
			$\ell\gets \max\left\{ \frac{2(1+\epsilon_1)}{3\epsilon_1^2 q^2}\ln\frac{2n(n-1)}{\delta},\frac{2(1-\epsilon_1)}{3\epsilon_2^2 q^2}\ln\frac{2n(n-1)}{\delta}\right\}$;\\
%            $\ell\gets \max\left\{ \ell,\frac{2(1-\epsilon_1)}{3\epsilon_2^2(\OPT_k^c)^2}\ln\frac{n(n-1)}{\delta} \right\};\\
            \If{$|\mathcal{R}|<\ell$}{
                    Add more random samples into $\mathcal{R}$ until $|\mathcal{R}|\geq \ell$\\
                    %$I\leftarrow \mathsf{Initialize}(\mathcal{R})$\\
            }
            $(C^*,B^*)\gets \mathsf{SearchKC\_1}(G, k, \R)$\\
            Generate another set $\R'$ of random samples such that $|\R'|=|\R|$\\
            \ForEach{$(u,v)\in B^*\wedge u\neq v$}{
                $a\gets \ln\frac{2(n-k)}{\delta};~\theta\gets |\R'|$\\
                $z(u,v)\gets \left(\sqrt{\wht{\Pr}[\mathcal{R}', u\sim v]} - \sqrt{\frac{a}{6\theta}}\right)^{2} - \frac{a}{6\theta}$
            }
            $(u^*,v^*)\gets \arg\min_{(u,v)\in B^*\wedge u\neq v} z(u,v)$
%			$q'\leftarrow (1-\epsilon_3)q;~~(u,W)\leftarrow \mathsf{GetFirstNode\_1}(q',\mathcal{R})$\\
%            $C\leftarrow \{u\};~~\eta\leftarrow \mathit{UB}(u);$\\%~~g(\cdot)\gets \wht{L}_{\R}(q,\cdot);$\\
%            \While{$\wht{L}_{\mathcal{R}}(q',C)<nq'-\epsilon_1 q'\wedge |C|<\lceil\ln\frac{n}{\epsilon_1 } \rceil k$}
%                {
%                    $( u,W)\leftarrow \mathsf{GetNextNode}(W,\wht{L}_{\mathcal{R}}(q',\cdot),C)$;\\
%                    $C\leftarrow C\cup \{u\}$
%                }
%            %Find $C$ such that $L(q,C)\geq nq-\epsilon_1$ and $|C|$ is (approximately) minimized.\label{ln:callgrecoverinbisearch}\\
%			\eIf{$|C| \leq \lceil\ln\frac{n}{\epsilon_1 } \rceil k\wedge \wht{L}_{\mathcal{R}}(q',C)\geq nq'-\epsilon_1 q'$ \label{ln:judgeca}}
%			{
%				$C^*\leftarrow C;~~ q_1 \leftarrow q;~~B^*\gets D(\R,C^*)$ \label{ln:setS}
%			}
%			{
%				$q_2 \leftarrow q$ \label{ln:setb}
%			}
            %$q \leftarrow (q_1 + q_2)/2$
		}\label{ln:repeatends}
  \Return{$(C^*,B^*)$}
    \caption{$\mathsf{SamplingKC\_1}(\epsilon_1,\epsilon_2,\epsilon,\delta)$}
    \label{alg:samplingkc1}
  \end{algorithm}

\subsubsection{Sampling for Bi-Criteria Approximation}
A straightforward idea is that, we first generate a set $\R$ of random samples, and then call the $\mathsf{SearchKC}$ algorithm by replacing the function $f_v(\cdot)$ by $\wht{f}_v(\R,\cdot)$. After the $\mathsf{SearchKC}$ algorithm returns a set $C^*$ of centering nodes, we use $B^*=D(\R,C^*)$ as an approximate solution to the $k$-center problem. Clearly,   if $|\R|$ is sufficiently large, then $B^*$ should achieve an approximation ratio close to that we can get with an connectivity oracle. The key problem in this approach, however, is how to determine the cardinality of $\R$. In the following theorem, we propose an upper bound for the number of random samples needed to be generated:

\begin{theorem}
Let $\epsilon_1,\epsilon_2,\epsilon_3,\epsilon$ and $\delta$ be any numbers in $(0,1)$ that satisfy $(1-\epsilon)(1+\epsilon_3)=(1-\epsilon_1)(1-\epsilon_2)(1-\epsilon_3)$.
Let $\mathcal{R}$ be any set of random samples of $G$ such that $$|\mathcal{R}|\geq \frac{2(1+\epsilon_3)}{3\epsilon_3^2 (1-\epsilon) \mathrm{OPT}_k^c}\ln\frac{n^2+n-2k}{2\delta}.$$
Then, we can use the $\mathsf{SearchKC}$ algorithm to find a $k$-clustering $B^*$ with no more than $\lceil \ln \frac{n}{\epsilon_1}\rceil k$ centering nodes such that $\mathit{KC}(B^*)\geq (1-\epsilon)\mathrm{OPT}_{k}^c$ with probability of at least $1-\delta$.
\label{thm:samplingubforkcenter}
\end{theorem}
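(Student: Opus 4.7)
My plan is to decouple the proof into a deterministic correctness analysis that assumes the sampled probabilities $\wht{\Pr}[\R,u\sim v]$ are multiplicatively close to the true probabilities on the relevant pairs, followed by a Chernoff-plus-union-bound argument that controls the failure probability of this closeness. The identity $(1-\epsilon)(1+\epsilon_3)=(1-\epsilon_1)(1-\epsilon_2)(1-\epsilon_3)$ already telegraphs where the three losses will come from: $(1-\epsilon_1)(1-\epsilon_2)$ is inherited verbatim from the oracle-based guarantee for $\mathsf{SearchKC}$, and the residual $(1-\epsilon_3)/(1+\epsilon_3)$ is the cost of swapping the oracle out for its sampled estimator on both the ``reference'' side (lower-bounding the sampled optimum by the true $\OPT_k^c$) and the ``output'' side (recovering the true $\KC(B^*)$ from its sampled value).

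Formally, I would define the good event $\mathcal{E}$ so that $(1-\epsilon_3)\Pr[u\sim v]\leq \wht{\Pr}[\R,u\sim v]$ on each of the $n-k$ non-trivial cluster links of the fixed optimum $B^o$, and $\wht{\Pr}[\R,u\sim v]\leq (1+\epsilon_3)\Pr[u\sim v]$ on every ordered pair with $\Pr[u\sim v]\geq (1-\epsilon)\OPT_k^c$. Conditional on $\mathcal{E}$, running $\mathsf{SearchKC}$ with the surrogate $\wht{f}_v(\R,\cdot)$ is formally identical to running its oracle counterpart on a hypothetical graph whose true connection probabilities are $\wht{\Pr}[\R,\cdot]$, so the prior theorem delivers a set $C^*$ with $|C^*|\leq \lceil\ln(n/\epsilon_1)\rceil k$ and $\min_v \wht{f}_v(\R,C^*)\geq (1-\epsilon_1)(1-\epsilon_2)\wht{\OPT}_k^c$, where $\wht{\OPT}_k^c:=\max_{B\in\mathcal{S}_k^G}\wht{\KC}(\R,B)$. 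I would then chain three inequalities: (i)~$\wht{\OPT}_k^c\geq \wht{\KC}(\R,B^o)\geq (1-\epsilon_3)\OPT_k^c$ via the lower-tail half of $\mathcal{E}$ on $B^o$; (ii)~the algorithmic guarantee above; and (iii)~$\KC(B^*)\geq \min_v \wht{f}_v(\R,C^*)/(1+\epsilon_3)$ via the upper-tail half of $\mathcal{E}$ on the cluster links of $B^*=D(\R,C^*)$. Multiplying these and invoking the stated identity yields $\KC(B^*)\geq (1-\epsilon)\OPT_k^c$.

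For the concentration step, each $\wht{\Pr}[\R,u\sim v]$ is the empirical mean of $|\R|$ i.i.d.\ Bernoulli trials with mean $p:=\Pr[u\sim v]$, so the multiplicative Chernoff bound controls both tails by a term of the form $\exp(-\Theta(\epsilon_3^2 |\R|\, p))$; for every relevant pair this bound can be replaced by $\exp(-\Theta(\epsilon_3^2 |\R|(1-\epsilon)\OPT_k^c))$ using monotonicity of the tail in the threshold. A union bound over the $2(n-k)$ two-tailed events on the cluster links of the fixed $B^o$ together with the $n(n-1)$ upper-tailed events on all ordered pairs (needed because $B^*$ is data-dependent) produces the total event count $n(n-1)+2(n-k)=n^2+n-2k$, and the stated sample size then forces $\Pr[\bar{\mathcal{E}}]\leq \delta$. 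The main obstacle will be justifying step~(iii) for cluster links of $B^*$ whose true probabilities happen to fall below the threshold $(1-\epsilon)\OPT_k^c$ and therefore lie outside the literal scope of $\mathcal{E}$: I would resolve this by a short contradiction, noting that such a pair cannot appear in $B^*$ under $\mathcal{E}$, since combining (i) and (ii) forces $\wht{\Pr}[\R,u\sim v]\geq (1-\epsilon_1)(1-\epsilon_2)(1-\epsilon_3)\OPT_k^c=(1-\epsilon)(1+\epsilon_3)\OPT_k^c$, while the corresponding upper-tail Chernoff event (applied with the threshold $(1-\epsilon)\OPT_k^c$ in place of the tiny $\Pr[u\sim v]$) is already absorbed into the same union bound.
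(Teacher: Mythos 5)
Your proposal follows essentially the same route as the paper's proof: a good event giving lower-tail accuracy on the cluster links of $B^o$ and upper-tail control at the threshold $(1-\epsilon)\mathrm{OPT}_k^c$ on the remaining pairs, the $(1-\epsilon_1)(1-\epsilon_2)$ guarantee of $\mathsf{SearchKC}$ applied to the surrogate objective, chained through $(1-\epsilon_3)$ and $(1+\epsilon_3)$ factors to reach $(1-\epsilon)\mathrm{OPT}_k^c$ — and your contradiction handling of the data-dependent links of $B^*$ is exactly how the paper treats its set $\mathcal{K}$ of low-probability pairs. The only discrepancy is bookkeeping: counting $n(n-1)$ ordered upper-tail events plus $2(n-k)$ two-sided events yields failure probability $2\delta$ under the stated sample size, whereas using unordered pairs and one-sided tails (${n\choose 2}+(n-k)=(n^2+n-2k)/2$ events) recovers the paper's $\delta$ exactly.
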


As $\OPT_k^c$ is unknown, we need to find a lower bound of $\OPT_k^c$ to determine the cardinality of $\R$. Recall that we have used a trivial lower bound $k/n$ for $\OPT_k^m$ in the $k$-median problem. However, it is hard to find an ideal lower bound for $\OPT_k^c$. A trivial lower bound for $\OPT_k^c$ is the production of the existence probabilities of all the edges in $E$, but this lower bound could be too small and hence results in a large number of generated random samples. In the sequel we will provides more efficient algorithm for $k$-Center.

%\subsubsection{A More Efficient Sampling Algorithm}

\begin{algorithm} [t]
   \KwIn{$(1-\epsilon)(1+\epsilon_3)=(1-\epsilon_1)(1-\epsilon_2)(1-\epsilon_3)$}%where $0<\sigma+\gamma<\alpha$}
%   \KwOut{A set $S\subseteq V$ and an interval $[a,b]\subseteq [1,n]$}
		$[q_1,q_2]\leftarrow [0,1]; C^*\leftarrow \emptyset; \mathcal{R}\leftarrow \emptyset;i\gets 0$ \label{ln:setab}\\
		\Repeat{$q_1\geq (1-\epsilon_2)q_2$}
		{
            $i\gets i+1;~q\leftarrow \frac{q_1+q_2}{2};~\delta\gets \frac{6\delta}{\pi^2 i^2}$\\
			$\ell\leftarrow \lceil\frac{2(1+\epsilon_3)}{3\epsilon_3^2 (1-\epsilon)q}\ln\frac{n^2+n-2k}{2\delta}\rceil$;\\
            \If{$|\mathcal{R}|<\ell$}{
                    Add more random samples into $\mathcal{R}$ until $|\mathcal{R}|=\ell$\\
                    %$I\leftarrow \mathsf{Initialize}(\mathcal{R})$\\
            }
			$q'\leftarrow (1-\epsilon_3)q;~~(u,W)\leftarrow \mathsf{GetFirstNode\_1}(q',\mathcal{R})$\\
            $C\leftarrow \{u\};~~\eta\leftarrow \mathit{UB}(u);$\\%~~g(\cdot)\gets \wht{L}_{\R}(q,\cdot);$\\
            \While{$\wht{L}_{\mathcal{R}}(q',C)<nq'-\epsilon_1 q'\wedge |C|<\lceil\ln\frac{n}{\epsilon_1 } \rceil k$}
                {
                    $( u,W)\leftarrow \mathsf{GetNextNode}(W,\wht{L}_{\mathcal{R}}(q',\cdot),C)$;\\
                    $C\leftarrow C\cup \{u\}$
                }
            %Find $C$ such that $L(q,C)\geq nq-\epsilon_1$ and $|C|$ is (approximately) minimized.\label{ln:callgrecoverinbisearch}\\
			\eIf{$|C| \leq \lceil\ln\frac{n}{\epsilon_1 } \rceil k\wedge \wht{L}_{\mathcal{R}}(q',C)\geq nq'-\epsilon_1 q'$ \label{ln:judgeca}}
			{
				$C^*\leftarrow C;~~ q_1 \leftarrow q;~~B^*\gets D(\R,C^*)$ \label{ln:setS}
			}
			{
				$q_2 \leftarrow q$ \label{ln:setb}
			}
            %$q \leftarrow (q_1 + q_2)/2$
		}\label{ln:repeatends}
  \Return{$(C^*,B^*)$}
    \caption{$\mathsf{SearchKC+}(\epsilon_1,\epsilon_2,\epsilon_3,\epsilon,\delta)$}
    \label{alg:samplingbisearch}
  \end{algorithm}

%\begin{algorithm} [t]
%%   \KwIn{$\epsilon, \epsilon_1,\epsilon_2$; $g(\cdot)$ is a submodular function}%where $0<\sigma+\gamma<\alpha$}
%%   \KwOut{A set $S\subseteq V$ and an interval $[a,b]\subseteq [1,n]$}
%        \ForEach{$v\in V$}{
%            $\mathit{UB}(v)\leftarrow$ the summation of the sizes of the connected components in $\mathcal{R}$ that contain $v$;
%		}
%        Sort $V$ into $u_1,\cdots,u_n$ such that $\mathit{UB}(u_1)\geq \cdots\geq \mathit{UB}(u_n)$;\\
%  \Return{$I=\langle u_1,\cdots,u_n\rangle$}
%    \caption{$\mathsf{Initialize}(\mathcal{R})$}
%    \label{alg:bisearch}
%  \end{algorithm}

\begin{algorithm} [ht!]
%   \KwIn{$\epsilon, \epsilon_1,\epsilon_2$; $g(\cdot)$ is a submodular function}%where $0<\sigma+\gamma<\alpha$}
%   \KwOut{A set $S\subseteq V$ and an interval $[a,b]\subseteq [1,n]$}
    Compute the node list $W$ and the values of $\forall v\in V: \UB(v)$ using Lines 1-3 of $\mathsf{GetFirstNode}(\mathcal{R})$\\
    $(u,W)\leftarrow \mathsf{GetNextNode}(W,\wht{L}_{\mathcal{R}}(q,\cdot),\emptyset)$\\
    \Return{$(u,W)$}
    \caption{$\mathsf{GetFirstNode\_1}(q,\mathcal{R})$}
    \label{alg:bisearch}
\end{algorithm}

We first study whether we can apply the OPIM framework~\cite{Tang2018} to accelerate our algorithm.
For any $B\in \mathcal{S}_k^G$, any $C\subseteq V$ and any $v\in V$, we define
\begin{eqnarray}
&&{\wht{\mathit{KC}}}({\mathcal{R}},B)={\min\nolimits_{(u,v)\in B}\wht{\mathrm{Pr}}[{\mathcal{R}}, u\sim v]}\nonumber
%&&\wht{f}_v(\mathcal{R},C)=\max\{\wht{\mathrm{Pr}}}[{\mathcal{R},u\sim v]\mid u\in C\}
\end{eqnarray}
Let $(u^o,v^o)$ denote a cluster link in $B^o$ such that $\wht{\Pr}[\R, u^o\sim v^o]= {\wht{\mathit{KC}}}({\mathcal{R}},B^o)$.
The OPIM framework requires that we can find an upper bound of $\wht{\Pr}[\R, u^o\sim v^o]$ using $B^*$, under the purpose that we can get an upper bound of $\OPT_k^c$. This idea, however, cannot be applied to the $k$-center problem. To explain, note that $\wht{\mathrm{Pr}}[\mathcal{R},u^o\sim v^o]$ could be larger than $\wht{KC}(\mathcal{R},{B}^o)$, while we can only guarantee that $\wht{KC}(\mathcal{R},{B}^o)$ is no more than $\wht{\mathrm{Pr}}[\mathcal{R},u\sim v]$ for all $(u,v)\in B^*$. Therefore, it is possible that $\wht{\mathrm{Pr}}[\mathcal{R},u^o\sim v^o]$ is larger than the estimated probability of any cluster link in $B^*$. Therefore, the OPIM framework cannot be applied to the $k$-center problem.

Based on the above observation, we propose a method to judge whether $q\geq \OPT_k^c$ using a relatively small number of random samples. For any $q\in (0,1]$ and any $C\subseteq V$, we define
\begin{eqnarray}
\wht{L}_{\mathcal{R}}(q,C)=\sum\nolimits_{v\in V}\min\{q,\wht{f}_v(\mathcal{R},C)\}
%\wht{z}(d,A)= n|\mathcal{F}| |l(A)\cap Q| +\sum_{y\in \mathcal{F}}\min\{d,\wht{y}(\mathcal{R}_y,A)\}
\end{eqnarray}
and we prove the following lemma:
%and define a ``dual problem'' for [KC1] as
%    \begin{eqnarray}
%        \mathbf{Minimize}&&~~|C| \qquad\qquad\qquad\qquad \mathbf{[kCenter\_Coverage]}\nonumber\\
%        \mathbf{s.t.}&&~~\wht{L}_{\mathcal{R}}(q,C)\geq nq \label{eqn:robustIEOconstraint}
%        %&&~~h(A)\geq n|Q|+\xi \nonumber
%    \end{eqnarray}
%\label{def:defofz}

%\begin{lemma}
%If $\mathrm{OPT}_k^c\geq q$, then the greedy procedure in Lines~\ref{}-\ref{} of the $\mathsf{SearchKC}$ algorithm can find $C\subseteq V$ such that $|C|\leq \lceil\ln\frac{n}{\epsilon_1 } \rceil k$ and $\wht{L}_{\R}(q,C)\geq (n-\epsilon_1)(1-\epsilon_3)q$ and $\min_{v\in V}\wht{f}_v(\R,C)\geq (1-\epsilon_1)(1-\epsilon_3)q$ with probability of at least $1-\delta$.
%\end{lemma}

\begin{lemma}
Let $q, \delta$ be any numbers in $(0,1)$ and $\R$ be any set of weakly dependant random samples of $G$. If $\OPT_k^c>q$ and  $|\mathcal{R}|\geq \frac{2(1+\epsilon_3)}{3\epsilon_3^2 q}\ln\frac{n-k}{\delta}$, then we must have $$\Pr[\wht{L}_{\R}(q,C)\geq (n-\epsilon_1)(1-\epsilon_3)q]\geq 1-\delta,$$ where $C=\mathsf{Greedy}(\lceil\ln\frac{n}{\epsilon_1 } \rceil k, \wht{L}_{\R}(q,\cdot))$.
\label{lma:judgegreedyinrandomworld1}
\end{lemma}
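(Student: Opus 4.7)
The plan is to combine a Chernoff-style concentration argument that lower-bounds the empirical value of a known good solution with the standard submodular-cover analysis of the greedy algorithm. Concretely, I would (i) exhibit a feasible $k$-set whose $\wht{L}_{\R}(q,\cdot)$-value is close to $nq$ with high probability, and then (ii) invoke the classical $1-(1-1/k)^i$ guarantee of greedy on the monotone submodular function $\wht{L}_{\R}(q,\cdot)$ to conclude that after $i=\lceil \ln(n/\epsilon_1)\rceil k$ iterations the output $C$ meets the stated bound.

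For step (i), let $C^o$ be the set of centering nodes of an optimal $k$-center clustering. Since $\OPT_k^c > q$, for every $v \in V$ there exists $u(v) \in C^o$ with $\mathrm{Pr}[u(v) \sim v] \geq \OPT_k^c > q$; when $v \in C^o$ I take $u(v)=v$ so the empirical estimate is $1$ trivially, and only the $n-k$ nodes $v \notin C^o$ need to be controlled. For each such $v$, I would apply the multiplicative Chernoff/Bernstein lower-tail inequality to the empirical mean $\wht{\mathrm{Pr}}[\R, u(v) \sim v]$, whose underlying Bernoulli parameter is at least $q$; the hypothesis on $|\R|$ is calibrated to force the failure probability $\mathrm{Pr}[\wht{\mathrm{Pr}}[\R, u(v) \sim v] < (1-\epsilon_3)q]$ to at most $\delta/(n-k)$. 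A union bound then gives $\wht{f}_v(\R, C^o) \geq (1-\epsilon_3)q$ for every $v$ simultaneously with probability at least $1-\delta$, whence $\wht{L}_{\R}(q, C^o) = \sum_{v \in V}\min\{q, \wht{f}_v(\R, C^o)\} \geq n(1-\epsilon_3)q$ on that event.

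For step (ii), observe that $\wht{L}_{\R}(q,\cdot)$ is monotone and submodular on $2^V$, being a sum of truncations $\min\{q, \wht{f}_v(\R, \cdot)\}$ of monotone submodular functions (submodularity is preserved by truncating above at a constant and by summation). Since $C^o$ is a size-$k$ witness with value $\geq n(1-\epsilon_3)q$, the maximum of $\wht{L}_{\R}(q,\cdot)$ over cardinality-$k$ sets is at least $n(1-\epsilon_3)q$. Standard greedy analysis for monotone submodular maximization then yields, after $i=\lceil \ln(n/\epsilon_1)\rceil k$ iterations,
\[
\wht{L}_{\R}(q, C) \geq \bigl(1-(1-1/k)^i\bigr)\cdot n(1-\epsilon_3)q \geq (1-\epsilon_1/n)\cdot n(1-\epsilon_3)q = (n-\epsilon_1)(1-\epsilon_3)q,
\]
on the same $1-\delta$ event, which is precisely the conclusion.

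The main delicate point will be fixing the exact form of the multiplicative lower-tail inequality so that the hypothesis $|\R| \geq \frac{2(1+\epsilon_3)}{3\epsilon_3^2 q}\ln\frac{n-k}{\delta}$ cancels the per-node failure probability to $\delta/(n-k)$; this isolates whether the Bernstein-style bound with denominator $2(1+\epsilon_3)/3$ is the correct inequality to cite, and is the only place where the specific constants in the lemma are earned. A minor subtlety is the ``weakly dependent'' qualifier on $\R$: it refers to cross-sample reuse by the outer algorithm, whereas each per-pair estimate $\wht{\mathrm{Pr}}[\R, u \sim v]$ is itself an average over the i.i.d.\ indicators $X_R(u\sim v)$ across samples $R$ in the same family, so a one-sided concentration bound still applies unchanged.
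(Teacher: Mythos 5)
Your proposal is correct, and its probabilistic half coincides with the paper's: the paper also conditions on the event that every nontrivial cluster link of the true optimal clustering $B^o$ (there are at most $n-k$ of them, each with true connection probability at least $\OPT_k^c>q$) has empirical probability above $(1-\epsilon_3)q$, and bounds the failure probability by a union bound plus the one-sided concentration inequality, with exactly the calibration $(n-k)\exp\bigl(-\tfrac{3\epsilon_3^2\OPT_k^c}{2(1+\epsilon_3)}|\R|\bigr)\leq\delta$ that you flag as the delicate constant-checking step (the paper's Lemma~\ref{lma:concentrationbound} is stated only for the upper tail, so the lower-tail form does indeed need to be spelled out; your caution here applies to the paper's own proof as well). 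Where you diverge is the deterministic half. The paper says ``by similar reasoning with that in Lemma~\ref{lma:judgegreedy}'', i.e.\ it reruns the submodular set-cover contradiction: on the good event the center set $C^o$ of $B^o$ is a feasible cover for the truncated potential at level $(1-\epsilon_3)q$, so the bicriteria cover guarantee of the greedy (via~\cite{Goyal2013}) forces $\wht{L}_{\R}(q,C)\geq (n-\epsilon_1)(1-\epsilon_3)q$ once $\lceil\ln\frac{n}{\epsilon_1}\rceil k$ elements are allowed. You instead use $C^o$ as a size-$k$ witness of value at least $n(1-\epsilon_3)q$ and invoke the Nemhauser--Wolsey--Fisher bound $\wht{L}_{\R}(q,C)\geq\bigl(1-(1-1/k)^{\ell}\bigr)\wht{L}_{\R}(q,C^o)$ for $\ell=\lceil\ln\frac{n}{\epsilon_1}\rceil k$ greedy steps, which yields the same $(n-\epsilon_1)(1-\epsilon_3)q$ directly from $(1-\epsilon_1/n)\cdot n$. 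The two arguments rest on the same per-iteration recursion, but yours is more self-contained: it avoids restating the cover formulation with the shifted threshold and does not rely on the external cover lemma, at the cost of having to note explicitly that $\wht{L}_{\R}(q,\cdot)$ is monotone, submodular, and zero on the empty set, which you do.
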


%\begin{lemma}
%For any $q$ tested in the algorithm, we have $\Pr[|C^*_q|>\lceil\ln\frac{n}{\epsilon_1 } \rceil k\wedge \mathrm{OPT}_k^c\geq q]\leq \delta_q$, where $\delta_q$ denotes the value of $\delta$ when $q$ is tested.
%\end{lemma}
%\begin{proof}
%Let $\mathcal{R}_q$ denote the set $\mathcal{R}$ of random samples when $q$ is tested. Let $B^o_q$ denote a $k$-clustering in $G$ such that $\wht{\KC}(\R_q, B^o_q)$ is maximized. When $|C^*_q|> \lceil\ln\frac{n}{\epsilon_1 } \rceil k$, we must have $\wht{\KC}(\mathcal{R}_q,B^o_q)< (1-\epsilon_3)q$. Besides, we have $\wht{\KC}(\mathcal{R}_q,B^o)\leq \wht{\KC}(\mathcal{R}_q,B^o_q)$. So we can get
%\begin{eqnarray}
%&&\Pr[|C^*_q|>\lceil\ln\frac{n}{\epsilon_1 } \rceil k\wedge \mathrm{OPT}_k^c\geq q]\nonumber\\
%&\leq& \Pr[\wht{\KC}(\mathcal{R}_q,B^o)< (1-\epsilon_3)\mathrm{OPT}_k^c \wedge \mathrm{OPT}_k^c\geq q]\nonumber\\
%&\leq& \exp\left(-\frac{3\epsilon_3^2q}{2(1+\epsilon_3)}|\mathcal{R}_q|\right)\leq \delta_q
%\end{eqnarray}
%\end{proof}

Note that we must have $\wht{KC}(\R,B^o)\geq (1-\varepsilon_3)\OPT_k^c$ with probability of at least $1-\delta$ when $|\mathcal{R}|\geq \frac{2(1+\epsilon_3)}{3\epsilon_3^2 q}\ln\frac{n-k}{\delta}$. So the proof of Lemma~\ref{lma:judgegreedyinrandomworld1} is similar to that of Lemma~\ref{lma:judgegreedy}.

%Lemma~\ref{lma:judgegreedyinrandomworld1} implies that, if $\wht{L}_{\R}(q,C)< (n-\epsilon_1)(1-\epsilon_3)q$ and $|\mathcal{R}|\geq \frac{2(1+\epsilon_3)}{3\epsilon_3^2 q}\ln\frac{n-k}{\delta}$, then we can judge that $\OPT_k^c>q$, and the probability that such a judgement is wrong is no more than $\delta$.

With Lemma~\ref{lma:judgegreedyinrandomworld1}, we propose a binary searching process similar to that in Algorithm~\ref{alg:bisearch} to find a bi-criteria approximation solution to the $k$-center problem, as shown by the $\mathsf{SearchKC+}$ algorithm.

Similar to the $\mathsf{SearchKC}$ algorithm, the $\mathsf{SearchKC+}$ algorithm also maintains a searching interval $[q_1,q_2]$ and halves this interval in each iteration. The main difference between $\mathsf{SearchKC}$ and $\mathsf{SearchKC+}$ is that we have replaced the function ${L}(q,\cdot)$ by $\wht{L}_{\R}(q,\cdot)$ and used Lemma~\ref{lma:judgegreedyinrandomworld1} to guide the direction of the binary searching process. More specifically, in each iteration $i$, we set $q=\frac{q_1+q_2}{2}$, and generate a set $\R$ of $\frac{2(1+\epsilon_3)}{3\epsilon_3^2 q}\ln\frac{n-k}{\delta}$ random samples. Then we greedily select at most $\ln$ nodes into $C$. If $\wht{L}_{\R}(q,C)\geq (n-\epsilon_1)(1-\epsilon_3)q$, then we can judge that $\OPT_k^c>q$, and the probability that such a judgement is wrong is no more than $\frac{6\delta}{\pi^2 i^2}$ due to Lemma~\ref{lma:concentrationbound}. By the union bound, the probability that we have searched the wrong direction is no more than $\sum_{i=1}^\infty \frac{6\delta}{\pi^2 i^2} =\delta$. If we never search the wrong direction, then we use similar reasoning with that in Sec.~\ref{sec:bisearchwithoracle} to know that we have got a good approximation solution. More specifically, we can prove:

\begin{theorem}
For any $\epsilon, \epsilon_1,\epsilon_2, \epsilon_3\in (0,1)$ satisfying $1-\epsilon=(1-\epsilon_1)(1-\epsilon_2)(1-\epsilon_3)$, the $\mathsf{SearchKC+}$ algorithm can find a solution that achieves $(1-\epsilon){\mathrm{OPT}}_k^c$ with probability of at least $1-\delta$, using at most $\left\lceil \ln\frac{n}{\epsilon_1}\right\rceil k$ centers. This algorithm has no more than $\lceil \log_2 \frac{1}{\epsilon_2\cdot {\mathrm{OPT}}_k^c} \rceil$ iterations.
\end{theorem}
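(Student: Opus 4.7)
The plan is to re-use the binary-search skeleton analyzed in Section~\ref{sec:bisearchwithoracle} for $\mathsf{SearchKC}$, replacing the deterministic greedy test by its randomized counterpart from Lemma~\ref{lma:judgegreedyinrandomworld1} and paying for the randomness with a union bound across iterations. I would assign iteration $i$ the failure budget $6\delta/(\pi^2 i^2)$, so that $\sum_{i\geq 1}6\delta/(\pi^2 i^2)=\delta$ upper-bounds the total failure probability. Conditioning on the event $\mathcal{E}$ that every iteration produces a correct verdict, a straightforward induction preserves the invariant $\OPT_k^c\leq q_2$: initially $q_2=1$, and a shrinkage $q_2\leftarrow q$ only happens when the $\wht{L}_\R$-test fails, which by the contrapositive of Lemma~\ref{lma:judgegreedyinrandomworld1} forces $\OPT_k^c\leq q$. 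Hence the termination rule $q_1\geq(1-\epsilon_2)q_2$ yields $q_1\geq(1-\epsilon_2)\OPT_k^c$.

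Next I would extract the approximation bound from the last iteration that updated $C^*$. That iteration satisfies $\wht{L}_\R((1-\epsilon_3)q_1,C^*)\geq(n-\epsilon_1)(1-\epsilon_3)q_1$, and since each summand of $\wht{L}_\R$ is capped at $(1-\epsilon_3)q_1$, the total deficit against that cap is at most $\epsilon_1(1-\epsilon_3)q_1$. The deficit at any single vertex cannot exceed the total, so $\wht{f}_v(\R,C^*)\geq(1-\epsilon_1)(1-\epsilon_3)q_1$ holds for every $v\in V$; by the definition of $D(\R,\cdot)$, every cluster link in $B^*$ therefore inherits the same empirical lower bound, giving $\wht{\KC}(\R,B^*)\geq(1-\epsilon_1)(1-\epsilon_3)q_1$. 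To pass from $\wht{\KC}$ to the true $\KC$, I would fold into $\mathcal{E}$ one additional multiplicative Chernoff event stating that $\mathrm{Pr}[u\sim v]\geq \wht{\mathrm{Pr}}[\R,u\sim v]/(1+\epsilon_3)$ uniformly over all (at most $n-k$) non-trivial cluster links of $B^*$; the sample size $\tfrac{2(1+\epsilon_3)}{3\epsilon_3^2 q}\ln\tfrac{n-k}{\delta}$ active at that iteration is exactly what a standard multiplicative Chernoff tail requires for this uniform statement. Chaining the three bounds produces
\[
\KC(B^*)\geq\frac{(1-\epsilon_1)(1-\epsilon_3)}{1+\epsilon_3}q_1\geq\frac{(1-\epsilon_1)(1-\epsilon_2)(1-\epsilon_3)}{1+\epsilon_3}\OPT_k^c=(1-\epsilon)\OPT_k^c,
\]
with the final equality supplied by the hypothesis $(1-\epsilon)(1+\epsilon_3)=(1-\epsilon_1)(1-\epsilon_2)(1-\epsilon_3)$.

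The two structural counts are essentially syntactic. The inner while-loop explicitly terminates as soon as $|C|$ reaches $\lceil\ln(n/\epsilon_1)\rceil k$, so $|C^*|\leq\lceil\ln(n/\epsilon_1)\rceil k$ is enforced by construction. For the iteration count, the width $q_2-q_1$ starts at $1$ and halves each round, and the outer loop terminates once $q_2-q_1\leq\epsilon_2 q_2$; since $q_2\geq\OPT_k^c$ under $\mathcal{E}$, this happens after at most $\lceil\log_2(1/(\epsilon_2\OPT_k^c))\rceil$ rounds. The main technical subtlety I anticipate is the careful budgeting in the randomized step: the slice $6\delta/(\pi^2 i^2)$ at the accepting iteration must simultaneously cover both the $\wht{L}_\R$-verdict guaranteed by Lemma~\ref{lma:judgegreedyinrandomworld1} and the Chernoff event relating $\wht{\KC}$ to $\KC$; splitting this slice in half (and replacing $\delta$ by $\delta/2$ inside both Chernoff invocations) only perturbs constants and is the cleanest way to keep the final failure probability at $\delta$.
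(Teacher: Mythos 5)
Your overall skeleton coincides with the proof the paper itself only sketches in the text preceding the theorem: a per-iteration failure budget of $6\delta/(\pi^2 i^2)$ summed to $\delta$, the contrapositive of Lemma~\ref{lma:judgegreedyinrandomworld1} to maintain $\OPT_k^c\le q_2$ so that termination gives $q_1\ge(1-\epsilon_2)\OPT_k^c$, the deficit argument converting the accepted certificate $\wht{L}_{\R}(q',C^*)\ge nq'-\epsilon_1 q'$ into a per-vertex bound $\wht{f}_v(\R,C^*)\ge(1-\epsilon_1)(1-\epsilon_3)q_1$, and the purely structural bounds on $|C^*|$ and on the number of iterations. All of that is sound and matches the paper's intent.

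The gap is in the empirical-to-true conversion, precisely the step the paper hand-waves. First, your extra event ``$\mathrm{Pr}[u\sim v]\ge\wht{\mathrm{Pr}}[\R,u\sim v]/(1+\epsilon_3)$ uniformly over the cluster links of $B^*$'' cannot be obtained from $|\R|=\Theta\bigl(\tfrac{1}{\epsilon_3^2 q}\ln\tfrac{n-k}{\delta}\bigr)$ by a standard multiplicative Chernoff bound: the links of $B^*$ are chosen from the same samples (so the union must run over a fixed family of pairs, of size $\Theta(n^2)$, not $n-k$), and, more importantly, for a pair whose true probability $p$ is far below $q$ the relative-error tail has exponent proportional to $p|\R|$, which can be $o(1)$; with probability roughly $p|\R|$ such a pair is connected in a single sample, its empirical value jumps to $1/|\R|\approx\Theta(q)\gg(1+\epsilon_3)p$, it may then be selected as a cluster link, and your chain collapses. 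The correct repair is the threshold form used in the paper's proof of Theorem~\ref{thm:samplingubforkcenter}: union over all pairs whose true probability is below the target value and bound the probability that any of them attains an empirical value above the certificate level, so the required deviation is at least $\epsilon_3$ times the target irrespective of how small the true probability is; this is exactly why the algorithm's sample size uses $\ln\tfrac{n^2+n-2k}{2\delta}$ and the factor $\tfrac{1}{(1-\epsilon)q}$. Second, your closing identity $(1-\epsilon)(1+\epsilon_3)=(1-\epsilon_1)(1-\epsilon_2)(1-\epsilon_3)$ is the relation in the algorithm's input line and in Theorem~\ref{thm:samplingubforkcenter}, but it is not the hypothesis of the theorem you are proving, which asserts $1-\epsilon=(1-\epsilon_1)(1-\epsilon_2)(1-\epsilon_3)$; under the stated hypothesis your chain only yields $(1-\epsilon)\OPT_k^c/(1+\epsilon_3)$. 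The paper is itself inconsistent on this point, but you must either say explicitly that you prove the statement under the algorithm's parameter relation or rework the thresholds so the $(1+\epsilon_3)$ loss is absorbed; as written the final equality does not follow from the theorem's stated assumption.
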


Similar to the $\mathsf{SearchKM+}$ algorithm, $\mathsf{SearchKC+}$ also leverages the CELF framework to reduce the number of evaluating $\wht{L}_{\R}(q,\cdot)$. However, it uses a different procedure (i.e., the $\mathsf{GetFirstNode\_1}$ algorithm) to address the ``cold start'' problem.  %in the CELF framework. %More specifically, we first compute $f_v(\R,\{u\})$ for each $u\in V$ in

Finally, we ask the question whether the $\mathsf{SearchKC+}$ could generate ``too many'' random samples, compared with the upper bound proposed in Theorem~\ref{thm:samplingubforkcenter}. To answer this question, we prove the following theorem:

%we greedily selected  we can judge whether the If $\mathcal{E}_q$ does not happen for every tested $q$ in the algorithm, then we must have
%\begin{eqnarray}
%&&\mathrm{OPT}_k^c \leq q_2\leq q_1+\epsilon_2 q_2\nonumber\\
%&&\wht{\KC}(\R_{q_1},B^*)\geq (1-\epsilon_1)(1-\epsilon_3)q_1
%\end{eqnarray}
%when the algorithm terminates, which gives us $\wht{\KC}(\R_{q_1},B^*)\geq (1-\epsilon_1)(1-\epsilon_2)(1-\epsilon_3)\mathrm{OPT}_k^c$

\begin{theorem}
The expected number of random samples generated in the $\mathsf{SearchKC+}$ algorithm is at most $\mathcal{O}\left(\frac{1}{\epsilon^2 \OPT_k^c}\left(\ln \frac{n}{\delta}+\ln\ln \frac{1}{\epsilon\OPT_k^c}\right)\right)$
\label{thm:expectednumberofsamples}
\end{theorem}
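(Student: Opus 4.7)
The key structural observation is that $\mathcal{R}$ only ever grows across iterations of $\mathsf{SearchKC+}$: new samples are appended whenever $|\mathcal{R}|<\ell$, but none are ever discarded. Consequently, the final cardinality of $\mathcal{R}$ is exactly $\max_i \ell_i$, where $\ell_i = \lceil\tfrac{2(1+\epsilon_3)}{3\epsilon_3^2(1-\epsilon) q_i}\ln\tfrac{n^2+n-2k}{2\delta_i}\rceil$ is the target computed in iteration $i$ and $\delta_i=6\delta/(\pi^2 i^2)$ is the Bonferroni-corrected confidence. The task therefore reduces to a uniform upper bound on $\ell_i$, which depends on a lower bound on $q_i$ and an upper bound on $i$.

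I would next lower-bound $q_i$ via the invariant $q_2 \geq \OPT_k^c$. This invariant holds inductively: $q_2$ starts at $1$, and is updated to $q$ only in the branch where the $\wht{L}_{\mathcal{R}}$-test fails, which by Lemma~\ref{lma:judgegreedyinrandomworld1} implies $q \geq \OPT_k^c$ (conditioned on the correctness of that judgement). Since $q_i=(q_1+q_2)/2\geq q_2/2$, this yields $q_i\geq \OPT_k^c/2$ throughout. The number of iterations is bounded by $i_{\max}\leq \lceil\log_2(1/(\epsilon_2\OPT_k^c))\rceil$, inherited from the halving argument used for the oracle-based bi-criteria algorithm. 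Plugging both bounds into $\ell_i$, using $\ln(1/\delta_i) = O(\ln(1/\delta) + \ln i_{\max})$ and $\epsilon_3 = \Theta(\epsilon)$, gives
\[
\ell_i \;=\; O\!\left(\frac{1}{\epsilon^2 \OPT_k^c}\left(\ln\frac{n}{\delta} + \ln\ln\frac{1}{\epsilon\, \OPT_k^c}\right)\right),
\]
which matches the claimed bound.

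Finally, the expectation is handled by a union bound. The failure event that some $\wht{L}_{\mathcal{R}}$-test returns the wrong verdict has total probability at most $\sum_{i\geq 1} \delta_i = \delta$; on the complementary event of probability at least $1-\delta$, the displayed bound holds deterministically. On the failure event the outer loop still halves $[q_1,q_2]$ in every iteration and so terminates in at most $i_{\max}$ steps, and each $\ell_i$ is given by the same formula; its contribution to the expectation is of the same order and is absorbed into the $O(\cdot)$.

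The main obstacle is the interdependence between the sample-count bound and the correctness of \emph{every} binary-search judgement: the lower bound $q_i \geq \OPT_k^c/2$ relies on the invariant $q_2 \geq \OPT_k^c$, which in turn relies on Lemma~\ref{lma:judgegreedyinrandomworld1} firing correctly in all iterations. The schedule $\delta_i = 6\delta/(\pi^2 i^2)$ is engineered precisely to tolerate this via $\sum_i \delta_i \leq \delta$, and it is exactly this schedule that contributes the $\ln\ln(1/(\epsilon\OPT_k^c))$ term through $\ln i_{\max}$; once this correspondence is fixed, the rest is routine algebra.
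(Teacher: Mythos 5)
There is a genuine gap, and it sits exactly where the difficulty of this theorem lies: the passage from a high-probability bound to a bound on the \emph{expectation}. Your argument on the good event (all $\wht{L}_{\R}$-judgements correct, hence $q\geq \OPT_k^c/2$ throughout and at most $\lceil\log_2\frac{1}{\epsilon_2\OPT_k^c}\rceil$ iterations) is fine and matches the paper's evaluation of $\ell(i_0)$. But your treatment of the failure event is not. First, the claim that on the failure event ``the outer loop still halves $[q_1,q_2]$ in every iteration and so terminates in at most $i_{\max}$ steps'' is unjustified: if a coverage test wrongly fails at some $q<\OPT_k^c$, the algorithm sets $q_2\leftarrow q$ while $q_1$ may remain $0$, and the stopping rule $q_1\geq(1-\epsilon_2)q_2$ is a \emph{relative} criterion that mere halving of the interval does not trigger; the descent can continue to arbitrarily small $q$. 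Second, and more importantly, the sample requirement scales as $\ell \propto 1/q$, so on the failure event the number of generated samples grows like $2^{i}$ in the descent depth $i$ and is not ``of the same order'' as $\ell(i_0)$; a plain union bound giving total failure probability $\delta$ cannot absorb an exponentially large contribution into the expectation.

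The paper closes precisely this hole: it lets $q_{min}=2^{-i_{min}}$ be the smallest tested value, bounds the total sample count when $i_{min}=i$ by $\ell(i)$ with $\ell(i+1)\leq 3\ell(i)$, and then shows a \emph{per-level} tail bound $\Pr[i_{min}=i]\leq \exp\bigl(-\tfrac{3\epsilon_3^2\OPT_k^c}{2(1+\epsilon_3)}|\R_{i-1}|\bigr)\leq \delta^{\OPT_k^c\cdot 2^{i-1}}$ for $i>i_0$, using the fact that descending below $\OPT_k^c$ requires $\wht{\KC}(\R_{i-1},B^o)<(1-\epsilon_3)\OPT_k^c$ together with Lemma~\ref{lma:concentrationbound} applied with the already-enlarged sample set $|\R_{i-1}|\propto 2^{i-1}$. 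The doubly-exponential decay of this probability beats the geometric growth of $\ell(i)$, so the series $\sum_{i>i_0}\ell(i)\Pr[i_{min}=i]$ converges and the expectation is $O(\ell(i_0))$. Your proposal needs this per-level argument (or an equivalent one); the uniform-$\delta$ Bonferroni schedule you invoke only controls correctness, not the expected cost of the rare deep descents.
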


Note that the bound shown in Theorem~\ref{thm:expectednumberofsamples} is very close to that shown in Theorem~\ref{thm:samplingubforkcenter}, and it has only introduced an additional $\ln\ln \frac{1}{\epsilon\OPT_k^c}$ factor. This demonstrates that the $\mathsf{SearchKC+}$ algorithm would not generate a lot of unnecessary random samples.

%we need to generate $T_{max}=\mathcal{O}\left(\frac{1}{\epsilon^2 \OPT_k^c}\ln \frac{n}{\delta}\right)$ random samples in the basic sampling algorithm introduced in Sec.~\ref{}, but under the assumption that $\OPT_k^c$ is known. Therefore, Lemma~\ref{} reveals that the $\mathsf{SearchKC+}$ algorithm would not generate ``too many'' random samples when $\OPT_k^c$ is unknown, as the bound introduced in Lemma~\ref{} only introduces an additional $\ln\ln \frac{1}{\epsilon\OPT_k^c}$ factor compared to $T_{max}$.
%
%
%
%Recall that we need to generate $T_{max}=\mathcal{O}\left(\frac{1}{\epsilon^2 \OPT_k^c}\ln \frac{n}{\delta}\right)$ random samples in the basic sampling algorithm introduced in Sec.~\ref{}, but under the assumption that $\OPT_k^c$ is known. Therefore, Lemma~\ref{} reveals that the $\mathsf{SearchKC+}$ algorithm would not generate ``too many'' random samples when $\OPT_k^c$ is unknown, as the bound introduced in Lemma~\ref{} only introduces an additional $\ln\ln \frac{1}{\epsilon\OPT_k^c}$ factor compared to $T_{max}$.

\section{Conclusion}

We have studied the $k$-median and $k$-center problems in uncertain graphs. We have analyzed the complexity of these problems and proposed efficient algorithms with improved approximation ratios compared with the prior art.

\begin{appendix}
\section{Missing Lemmas} \label{sec:append}

\begin{lemma}
Given any set $A\subseteq V\times V$, any set $\mathcal{R}$ of weakly-dependant random samples of $G$ and any positive number $\varepsilon$, we have
    \begin{eqnarray}
    &&\mathrm{Pr}\left[\wht{\Upsilon}(A)-\Upsilon(A)\geq\varepsilon\right]\leq\exp\left\{-\frac{3\varepsilon^2|\mathcal{R}|}{2(\varepsilon+\Upsilon(A))}\right\} \nonumber
    \end{eqnarray}
where $\Upsilon(A)={\sum_{(u,v)\in A}{\Pr}[u\sim v]}/{|A|}$ and $\wht{\Upsilon}(A)=\sum_{(u,v)\in A}$ $\wht{\Pr}[\mathcal{R},u\sim v]/{|A|}$
\label{lma:concentrationbound}
\end{lemma}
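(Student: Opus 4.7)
The bound is a Chernoff-type concentration inequality for the sample mean of bounded independent random variables, so the natural strategy is the exponential-moment method.

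First, I would reduce to the i.i.d.\ scalar setting by introducing, for each random sample $R\in\mathcal{R}$, the per-sample aggregate
$$Y_R \;=\; \frac{1}{|A|}\sum_{(u,v)\in A} X_R(u\sim v) \;\in\; [0,1],$$
so that $\mathbb{E}[Y_R]=\Upsilon(A)$ and $\wht{\Upsilon}(A)=\frac{1}{|\mathcal{R}|}\sum_{R\in\mathcal{R}}Y_R$. The ``weakly dependent'' hypothesis on $\mathcal{R}$ enters the proof exactly here: it guarantees that distinct samples are generated by independent edge-existence experiments on $G$, so the $\{Y_R\}_{R\in\mathcal{R}}$ are mutually independent. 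Within-sample correlations among the indicators $X_R(u\sim v)$ for different pairs $(u,v)\in A$ are absorbed into the scalar $Y_R$ and play no further role.

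Second, I would apply the exponential Markov inequality: for any $t>0$,
$$\Pr\big[\wht{\Upsilon}(A)-\Upsilon(A)\geq\varepsilon\big] \;\leq\; e^{-t|\mathcal{R}|(\Upsilon(A)+\varepsilon)} \prod_{R\in\mathcal{R}} \mathbb{E}\big[e^{tY_R}\big].$$
Combining the convexity estimate $e^{ty}\leq 1+y(e^t-1)$ on $[0,1]$ with $1+z\leq e^z$ gives $\mathbb{E}[e^{tY_R}]\leq\exp(\Upsilon(A)(e^t-1))$, so the probability is at most $\exp\!\big(-|\mathcal{R}|\,[\,t(\Upsilon(A)+\varepsilon)-\Upsilon(A)(e^t-1)\,]\big)$. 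The Chernoff-optimal choice $t=\ln(1+\varepsilon/\Upsilon(A))$ converts this into the exponent $-|\mathcal{R}|\,\psi(\varepsilon,\Upsilon(A))$, where $\psi(\varepsilon,\mu)=(\mu+\varepsilon)\ln(1+\varepsilon/\mu)-\varepsilon$.

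Third, I would lower-bound $\psi(\varepsilon,\Upsilon(A))$ by $\tfrac{3\varepsilon^2}{2(\varepsilon+\Upsilon(A))}$ via a Bennett/Bernstein-type rational estimate of $(1+x)\ln(1+x)-x$, which is the origin of the specific constants in the statement. This last calculus step is the main obstacle: the exponential-moment derivation is textbook, but extracting the precise constants $3$ and $2$ in the stated denominator requires selecting the right rational minorant of $\psi$ and verifying it by setting $x=\varepsilon/\Upsilon(A)$, checking that the difference between the two sides vanishes to second order at $x=0$, and performing a sign-check on the first two derivatives on $[0,\infty)$.
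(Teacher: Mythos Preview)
The paper does not actually supply a proof of this lemma; it appears in the appendix's ``Missing Lemmas'' section and is thereafter invoked as a black box, so there is no original argument to compare against. Your exponential-moment plan is the natural one, and steps one and two are sound: the reduction to scalars $Y_R\in[0,1]$ with mean $\mu=\Upsilon(A)$ and the Chernoff optimisation at $t=\ln(1+\varepsilon/\mu)$ correctly produce the exponent $-|\mathcal{R}|\,\psi(\varepsilon,\mu)$ with $\psi(\varepsilon,\mu)=(\mu+\varepsilon)\ln(1+\varepsilon/\mu)-\varepsilon$.

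The genuine gap is step three. The minorant $\psi(\varepsilon,\mu)\geq \tfrac{3\varepsilon^2}{2(\varepsilon+\mu)}$ you propose is \emph{false}: with $x=\varepsilon/\mu$ it would require $(1+x)\ln(1+x)-x\geq \tfrac{3x^2}{2(1+x)}$, yet at $x=1$ the left side equals $2\ln 2 -1\approx 0.386$ while the right side equals $3/4$. The correct Bennett minorant is $(1+x)\ln(1+x)-x\geq \tfrac{x^2}{2(1+x/3)}$, yielding $\psi(\varepsilon,\mu)\geq \tfrac{3\varepsilon^2}{2(3\mu+\varepsilon)}$ with $3\mu$ rather than $\mu$ in the denominator. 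Nor is this merely slack in the method: for i.i.d.\ Bernoulli$(\mu)$ samples with small $\mu$ and $\varepsilon=\mu$, the true large-deviations rate is $\mu(2\ln 2-1)\approx 0.386\mu$, strictly below the $0.75\mu$ the lemma's constants demand, so the stated bound is violated for all large $|\mathcal{R}|$. The obstacle you flagged is therefore insurmountable with the constants as written; the lemma almost certainly intends $3\Upsilon(A)+\varepsilon$ in the denominator, matching the Chernoff form standard in the influence-maximization literature on which the paper builds.
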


\section{Missing Proofs}

\subsection{Proof of Theorem~\ref{thm:submodularitythm}}
\begin{proof}
For any $X\subseteq Y\subseteq V$ and any $x\in V\backslash Y$, we have
\begin{eqnarray}
&&f_v(X)=\max\{\mathrm{Pr}[u\sim v]\mid u\in X\}\nonumber\\
&\leq& \max\{\mathrm{Pr}[u\sim v]\mid u\in Y\}=f_v(Y)
\end{eqnarray}
So $f_v(\cdot)$ is monotone. Note that $$f_v(X\cup \{x\})=\max\{\mathrm{Pr}[x\sim v], f_v(X)\}$$ So we have:

1) if $\mathrm{Pr}[x\sim v]\geq f_v(Y)=\max\{\mathrm{Pr}[u\sim v]\mid u\in Y\}$, then we also have $\mathrm{Pr}[x\sim v]\geq f_v(X)$, so
\begin{eqnarray}
&&f_v(X\cup \{x\})-f_v(X)\nonumber\\
&=& \mathrm{Pr}[x\sim v]-f_v(X) \nonumber\\
&\geq& \mathrm{Pr}[x\sim v]-f_v(Y) \nonumber\\
&=& f_v(Y\cup \{x\})-f_v(Y)
\end{eqnarray}

2) if $\mathrm{Pr}[x\sim v]< f_v(Y)=\max\{\mathrm{Pr}[u\sim v]\mid u\in Y\}$, then we have $f_v(Y\cup \{x\})=f_v(Y)$, so we also have
\begin{eqnarray}
f_v(X\cup \{x\})-f_v(X)\geq 0= f_v(Y\cup \{x\})-f_v(Y)
\end{eqnarray}
\end{proof}

\subsection{Proof of Theorem~\ref{thm:samplingubforkmedian}}
\begin{proof}
%For any set $C$ of $k$ nodes in $V$ and any set $\mathcal{R}$ of random samples of $G$, we can generate a $k$-cluster by regarding the nodes in $C$ as the centers and assigning each node $v\in V\backslash C$ a center $u\in C$ such that 1) $u=\arg\max_{u\in C}\wht{\mathrm{Pr}}[\mathcal{R},u\sim v]$ and 2) the node-id of $u$ is minimized. For convenience, we use  $D(\mathcal{R},C)$ to denote the signature of such a $k$-cluster generated by $\mathcal{R}$ and $C$.

Let $\mathcal{S}_{\mathcal{R}}=\{{D}(\mathcal{R},C)||C|=k\wedge C\subseteq V\}$. Then we must have $|\mathcal{S}_\mathcal{R}|\leq {n\choose k}$. Let $\wht{A}^o$ denote the signature in $\mathcal{S}_{\mathcal{R}}$ such that $\wht{KM}(\mathcal{R},\wht{A}^o)$ is maximized. Let  $A^o$ denote the signature of an optimal clustering of the k-median problem. Note that $A^o$ is not necessarily in $\mathcal{S}_{\mathcal{R}}$, but there must exist certain $A'\in \mathcal{S}_{\mathcal{R}}$ such that $\wht{\mathit{KM}}(\mathcal{R},A')\geq \wht{\mathit{KM}}(\mathcal{R},A^o)$. Therefore, we must have:
\begin{eqnarray}
\wht{KM}(\mathcal{R},A^*)\geq (1-1/e)\wht{KM}(\mathcal{R},\wht{A}^o)\geq (1-1/e)\wht{\mathit{KM}}(\mathcal{R},A^o) \nonumber
\end{eqnarray}

With the above equation, we can ensure that $A^*$ has a $1-1/e-\epsilon$ approximation ratio by adopting the following sampling method: we generate sufficient random samples in $\mathcal{R}$ such that $\wht{\mathit{KM}}(\mathcal{R},A)$ is an estimation of $\mathit{KM}(A)$ within an absolute error of $\beta=\frac{e}{2e-1}\cdot \epsilon\mathrm{OPT}_{km}$ for any $A\in \mathcal{S}_{\mathcal{R}}\cup\{A^o\}$. More specifically, when $|\mathcal{R}|\geq \frac{2(2e-1)(e\epsilon+2e-1)}{3e^2\epsilon^2\mathrm{OPT}_{km}}\ln\frac{{n\choose k}+1}{\delta}$, we must have
\begin{eqnarray}
&&\mathrm{Pr} \left[ \exists A\in \mathcal{S}_{\mathcal{R}}: \mathit{KM}(A)< \wht{KM}(\mathcal{R},A)-\beta\right] \nonumber\\
&\leq& \sum_{A\in \mathcal{A}_{\mathcal{R}}} \mathrm{Pr}[\mathit{KM}(A)< \wht{KM}(\mathcal{R},A)-\beta]\nonumber\\
&\leq& \frac{{n\choose k}\delta}{{n\choose k}+1}
\end{eqnarray}
and
\begin{eqnarray}
\mathrm{Pr} \left[ \wht{\mathit{KM}}(\mathcal{R},A^o)<\mathit{KM}(A^o)-\beta \right]\leq \frac{\delta}{{n\choose k}+1}
\end{eqnarray}
Moreover, as $A^*\in \mathcal{A}_{\mathcal{R}}$, we can use the union bound to get that, with probability of at least $1-\delta$, we have:
\begin{eqnarray}
\mathit{KM}(A^*)&\geq& \wht{KM}(\mathcal{R},A^*)-\beta\nonumber\\
&\geq& (1-1/e)\wht{\mathit{KM}}(\mathcal{R},A^o)-\beta\nonumber\\
&\geq& (1-1/e)\left({{KM}}(A^o)-\beta\right)-\beta\nonumber\\
&=& (1-1/e-\epsilon)\mathrm{OPT}_{k}^m\nonumber
\end{eqnarray}
Hence, the lemma follows.
\end{proof}

\subsection{Proof of Theorem~\ref{thm:inapprofkcenter}}
\begin{proof}
Again, we prove the theorem by a reduction from the NP-hard dominating set problem. Given an instance $G=(V,E)$ of the dominating set problem where $|V|=n$, we can construct an uncertain graph by setting $p(e)=q=\frac{\alpha}{2(1+\alpha)n}$ for all $e\in E$. Suppose that there exists a polynomial-time algorithm that can find $B\in \mathcal{S}_k^G$ such that $\mathit{KC}(B)\geq \alpha \mathit{KC}(\tilde{B})$, where $\tilde{B}$ is an optimal solution to the $k$-center problem in the uncertain $G$ constructed above. We will prove that: there exists a dominating set $S$ in $G$ with $|S|=k$ if and only if $\mathit{KC}(B)\geq \alpha q$. Indeed, if such a dominating set $S$ exists, then we must have $\mathit{KC}(\tilde{B})\geq q$ and hence $\mathit{KC}(B)\geq \alpha q$. Conversely, if there does not exist such a dominating set $S$, then there must exist a cluster link $(u,v)$ in $B$ such that $u\neq v$ and $u$ is not adjacent to $v$. Thus, we can use similar reasoning as that in Theorem~\ref{thm:hardnesskmedian} to prove that
\begin{eqnarray}
\mathit{KC}(B)\leq \mathrm{\mathrm{Pr}}[u\sim v]\leq \frac{nq^2}{1-nq}<\alpha q
\end{eqnarray}
The above reasoning implies that, if there exists a polynomial-time algorithm to the $k$-center problem with any approximation ratio $\alpha$, then the dominating set can also be optimally solved in polynomial time. Hence, the theorem follows.
\end{proof}

\subsection{Proof of Theorem~\ref{thm:kcboundedbylnn}}
\begin{proof}
Given any graph $G=(V,E)$ with $|V|=n$, we can construct an uncertain graph by setting $p(e)=q=\frac{1}{3n}$. Under this setting, we can use similar reasoning with that in Theorem~\ref{thm:inapprofkcenter} to prove that: for any clustering $B$ of $G$, the set of center nodes in $B$ is a dominating set of $G$ if and only if $\mathit{KC}(B)\geq q$.

Suppose that the cardinality of the minimum dominating set in $G$ is $k^*$ ($k^*$ is unknown), and suppose by contradiction that there exists a bi-criteria approximation algorithm $\mathcal{A}$ that achieves the properties described by the theorem. Then we can find a dominating set of $G$ as follows. We run $\mathcal{A}$ for all $k\in [n]$. Let $C_k$ denote the set of center nodes returned by $\mathcal{A}$ for any $k\in [n]$. We then return the set $C\in \{C_1,\cdots, C_n\}$ such that $C$ is a dominating set of $G$ and $|C|$ is minimized.

Note that $\mathrm{OPT}_{k^*}^c\geq q$ in such a case. So $C_{k^*}$ must be a dominating set of $G$ according to the above reasoning. Moreover, we have $|C_{k^*}|< k^*\ln n$ and hence $|C|< k^*\ln n$. This implies that we have built a polynomial-time dominating set algorithm with an approximation ratio less than $\ln n$. However, it is proved in~\cite{feige1998threshold} that such a dominating set algorithm should not exist unless P=NP. Therefore, we got a contradiction, which proves the theorem.
\end{proof}

\subsection{Proof of Lemma~\ref{lma:judgegreedy}}
\begin{proof}
Consider the following optimization problem:
    \begin{eqnarray}
        \mathbf{Minimize}&&~~|C| \qquad\qquad\qquad\qquad \mathbf{[KCT\_COVER]}\nonumber\\
        \mathbf{s.t.}&&~~L(q,C)\geq nq;~~C\subseteq V \label{eqn:robustIEOconstraint}
    \end{eqnarray}
Note that $L(q,\cdot)$ is a monotone and submodular function and $L(q,V)=nq$. So this problem is actually a ``submodular set cover'' problem. Suppose that an optimal solution to [KCT\_COVER] is $C^o_{cover}$. We can use a greedy algorithm~\cite{Goyal2013} to find $C'$ such that
\begin{eqnarray}
&&|C'|\leq \left\lceil \ln\frac{n}{\epsilon_1}\right\rceil |{C}^o_{cover}|;~~L(q,C')\geq nq-\epsilon_1 q
\end{eqnarray}
Now suppose by contradiction that $\mathrm{OPT}_k^c> q$, then we must have
\begin{eqnarray}
\min\nolimits_{v\in V} f_v(C^o_{kc})=\mathrm{OPT}_k^c >q,
\end{eqnarray}
where $C^o_{kc}$ denote the set of centering nodes in $A^o$. So we have
\begin{eqnarray}
L(q,C^o_{kc})= \sum\nolimits_{v\in V}\min\{q,f_v(C^o_{kc})\}=qn
\end{eqnarray}
This implies that $C^o_{kc}$ is a feasible solution to [KCT\_COVER]. Therefore, we must have
\begin{eqnarray}
|C'|\leq \left\lceil \ln\frac{n}{\epsilon_1}\right\rceil |{C}^o_{cover}|\leq \left\lceil \ln\frac{n}{\epsilon_1}\right\rceil |{C}^o_{kc}|\leq \left\lceil \ln\frac{n}{\epsilon_1}\right\rceil k \nonumber
\end{eqnarray}
and hence $C'\subseteq C$. This implies $L(q,C)\geq nq-\epsilon_1q$, a contradiction. Hence, the lemma follows.
\end{proof}

\subsection{Proof of Theorem~\ref{thm:samplenum1}}
\begin{proof}
Let $Z=\{(u,v)|u,v\in V\wedge u\neq v\}$. Define the sets $Q_{\R},Q'_{\R}$ and the events $\mathcal{E}_1,\mathcal{E}_2$ as
\begin{eqnarray}
&&Q_{\R}=\{(u,v)|(u,v)\in Z\wedge \Pr[u\sim v]\geq (\OPT_k^c)^2\}\nonumber\\
&&\mathcal{E}_1=\{\forall (u,v)\in Q_{\R}:\wht{\Pr}[\R, u\sim v]\geq (1-\epsilon_1)(\OPT_k^c)^2\}\nonumber\\
&&Q'_{\R}=\{(u,v)|(u,v)\in Z\wedge \Pr[u\sim v]< (1-\epsilon)(\OPT_k^c)^2\}\nonumber\\
&&\mathcal{E}_2=\{\forall (u,v)\in Q'_{\R}:\wht{\Pr}[\R, u\sim v]< (1-\epsilon_1)(\OPT_k^c)^2\}\nonumber
\end{eqnarray}
In the sequel, we will prove that: when $\mathcal{E}_1$ and $\mathcal{E}_2$ both happen, then we must have $\KC(B^*)\geq (1-\epsilon)(\OPT_k^c)^2$.

Suppose that the nodes sequentially selected by $\mathsf{SearchKC\_1}$ are $v_1,v_2,\cdots,v_k$. Let $C^*_i=\{v_1,\cdots,v_i\}$. Let $v_{k+1}$ be a node in $V\backslash C^*_k$ such that $\wht{d}(\R,v_{k+1},C^*_k)$ is maximized. Let $l_i=\wht{d}(\R,v_{i+1},C^*_{i})$. We first prove that $l_k\leq -2\ln \KC(B^o)-\ln(1-\epsilon)$. It can be seen that $l_1,l_2,\cdots, l_{k}$ are non-increasing. Suppose by contradiction that $l_k>-2\ln \KC(B^o)-\ln(1-\epsilon)$, then there must exist two nodes $v_i$ and $v_j$ in one cluster of $B^o$ such that $\wht{d}({\R},v_i,v_j)\geq -2\ln \KC(B^o)-\ln(1-\epsilon)$, which implies that
\begin{eqnarray}
\wht{\Pr}[\R,v_i\sim v_j] <(1-\epsilon) (\OPT_k^c)^2
\end{eqnarray}
However, as $v_i$ and $v_j$ is in the same cluster of $B^o$, we must have $\Pr[v_i\sim v_j]\geq (\OPT_k^c)^2$, which contradicts the assumption that the event $\mathcal{E}_1$ happens.

As $l_k\leq -2\ln \KC(B^o)-\ln(1-\epsilon)$, we must have
\begin{eqnarray}
\forall (u,v)\in B^*: \wht{\Pr}[\R,u\sim v] \geq (1-\epsilon_1) (\OPT_k^c)^2
\end{eqnarray}
As $\mathcal{E}_2$ happens, we must have
\begin{eqnarray}
\KC(B^*) \geq (1-\epsilon) (\OPT_k^c)^2
\end{eqnarray}
Now the problem left is to prove that
\begin{eqnarray}
\Pr[\neg \mathcal{E}_1\vee \neg\mathcal{E}_2]\leq \delta
\end{eqnarray}
This can be proved by using the union bound and Lemma~\ref{lma:concentrationbound}. Hence the theorem follows.
\end{proof}

\subsection{Proof of Theorem~\ref{thm:samplingubforkcenter}}
\begin{proof}
Let $(u^*,v^*)$ be the cluster-link in $B^*$ such that ${\mathrm{Pr}}[u\sim v]$ is minimized. Let $(\wht{u},\wht{v})$ be the cluster-link in $B^*$ such that $\wht{\mathrm{Pr}}[\mathcal{R},\wht{u}\sim \wht{v}]$ is minimized. Let $B^o$ be the signature of an optimal $k$-clustering in $G$ such that $\mathit{KC}(B^o)$ is maximized. Let $(\wht{u}',\wht{v}')$ be the cluster-link in $B^o$ such that $\wht{\mathrm{Pr}}[\mathcal{R},\wht{u}'\sim \wht{v}']$ is minimized.  Let $({u}^o,{v}^o)$ be the cluster-link in $B^o$ such that ${\mathrm{Pr}}[{u}^o\sim {v}^o]$ is minimized.  Let $\mathcal{E}_1$ denote the following event:
\begin{eqnarray}
&&\mathcal{E}_1=\{\wht{\mathrm{Pr}}[\mathcal{R},\wht{u}'\sim \wht{v}']\geq (1-\epsilon_3){\mathrm{Pr}}[\wht{u}'\sim \wht{v}']\}~~\label{eqn:uhatisgood}
%&&\mathcal{E}_2=\{(1+\epsilon_3){\mathrm{Pr}}[u^*\sim v^*]\geq \wht{\mathrm{Pr}}[\mathcal{R},u^*\sim v^*]\}~~\label{eqn:ustarisgood}\\
%&&\mathcal{E}_3=\{{\mathrm{Pr}}[u^*\sim v^*]< (1-\epsilon)\mathrm{OPT}_k^c\}~~\label{eqn:ustarisgood}
\end{eqnarray}
If $\mathcal{E}_1$ hold, then we must have
\begin{eqnarray}
&& \wht{\mathrm{Pr}}[\mathcal{R},u^*\sim v^*]\geq \wht{\mathrm{Pr}}[\mathcal{R},\wht{u}\sim \wht{v}] \nonumber\\
&\geq& (1-\epsilon_1)(1-\epsilon_2)\wht{KC}(\mathcal{R},\wht{B}^o) \label{eqn:duetothekcugalg}\\
&\geq& (1-\epsilon_1)(1-\epsilon_2)\wht{KC}(\mathcal{R},{B}^o)\nonumber\\
&=& (1-\epsilon_1)(1-\epsilon_2)\wht{\mathrm{Pr}}[\mathcal{R},\wht{u}'\sim \wht{v}']\nonumber\\
&\geq& (1-\epsilon_1)(1-\epsilon_2)(1-\epsilon_3){\mathrm{Pr}}[\wht{u}'\sim \wht{v}'] \nonumber\\
&\geq& (1-\epsilon_1)(1-\epsilon_2)(1-\epsilon_3){\mathrm{Pr}}[{u}^o\sim {v}^o]   \nonumber\\
&=& (1-\epsilon)(1+\epsilon_3)\mathrm{OPT}_k^c  \nonumber
\end{eqnarray}
where \eqref{eqn:duetothekcugalg} is due to the performance guarantee of $\mathsf{SearchKC}$.

Note that there are at most $n-k$ possible choices for $(\wht{u}',\wht{v}')$. Therefore, when $|\mathcal{R}|\geq \frac{2(1+\epsilon_3)}{3\epsilon_3^2 \mathrm{OPT}_k^c}\ln\frac{n-k}{\delta_1}$, we can get
\begin{eqnarray}
&&\Pr[\neg \mathcal{E}_1]\leq (n-k)\exp\left(-\frac{3\epsilon_3^2\Pr[\wht{u}'\sim \wht{v}']}{2(1+\epsilon_3)}|\mathcal{R}|\right) \nonumber\\
&\leq& \frac{2(n-k)}{n^2+n-2k}\delta,
\end{eqnarray}
where we have used the the union bound and Lemma~\ref{lma:concentrationbound}.

Let $\mathcal{K}=\{(u,v)|u,v\in V\wedge \Pr[u\sim v]<(1-\epsilon)\mathrm{OPT}_k^c\}$. So we have $|\mathcal{K}|\leq {n\choose 2}$. When $|\mathcal{R}|\geq \frac{2(1+\epsilon_3)}{3\epsilon_3^2 (1-\epsilon) \mathrm{OPT}_k^c}\ln\frac{{n\choose 2}}{\delta_2}$, we can use the the union bound and Lemma~\ref{lma:concentrationbound} to get
\begin{eqnarray}
&&\Pr[(u^*,v^*)\in \mathcal{K}\wedge \mathcal{E}_1]\nonumber\\
&\leq& \Pr[\exists (u,v)\in \mathcal{K}: \wht{\mathrm{Pr}}[\mathcal{R},u\sim v]\geq (1-\epsilon)(1+\epsilon_3)\mathrm{OPT}_k^c] \nonumber\\
&\leq& \sum\limits_{(u,v)\in \mathcal{K}} \Pr[\wht{\mathrm{Pr}}[\mathcal{R},u\sim v]\geq \mathrm{Pr}[u\sim v] +\epsilon_3(1-\epsilon)\mathrm{OPT}_k^c]\nonumber\\
&\leq& {n\choose 2}\exp\left(-\frac{3\epsilon_3^2(1-\epsilon)\mathrm{OPT}_k^c}{2(1+\epsilon_3)}|\mathcal{R}|\right)\leq \frac{n(n-1)}{n^2+n-2k}\delta\nonumber
\end{eqnarray}

Combining the above results, we get
\begin{eqnarray}
&&\Pr[\Pr[u^*\sim v^*]<(1-\epsilon)\mathrm{OPT}_k^c] \leq\Pr[(u^*,v^*)\in \mathcal{K}]\nonumber\\
&\leq& \Pr[(u^*,v^*)\in \mathcal{K}\wedge \mathcal{E}_1]+\Pr[\neg\mathcal{E}_1]\nonumber\\
&\leq& \delta_1+\delta_2\leq \delta
\end{eqnarray}
%
%Note that there are at most ${n\choose 2}$ different cluster-links. By using the union bound and the similar reasoning with that in Lemma~\ref{}, we must have $\mathrm{Pr}[\neg \mathcal{E}_1]\leq \delta/2$ and $\mathrm{Pr}[\neg \mathcal{E}_2]\leq \delta/2$ when $|\mathcal{R}|\geq \frac{4(\epsilon_3+2)}{3\epsilon_3^2}\ln\frac{n(n-1)}{\delta}$. Hence, the lemma follows.
Hence, the theorem follows.
\end{proof}

\subsection{Proof of Lemma~\ref{lma:judgegreedyinrandomworld1}}
\begin{proof}
Let $\wht{B}^o$ denote the $k$-clustering in $G$ such that $\wht{KC}(\R,\wht{B}^o)$ is maximized. If $\wht{KC}(\R,\wht{B}^o)>(1-\epsilon_3)q$, then we can get $\wht{L}_{\R}(q,C)\geq (n-\epsilon_1)(1-\epsilon_3)q$ by similar reasoning with that in Lemma~\ref{lma:judgegreedy}. Therefore, we get%Note that $\wht{KC}(\R,B^o_1)\geq \wht{KC}(\R,B^o)$. So we have
\begin{eqnarray}
&&\Pr[\wht{L}_{\R}(q,C)< (n-\epsilon_1)(1-\epsilon_3)q]\nonumber\\
&\leq& \Pr[\wht{KC}(\R,\wht{B}^o)\leq (1-\epsilon_3)q] \nonumber\\
&\leq& \Pr[\wht{KC}(\R,B^o)\leq (1-\epsilon_3)q] \nonumber\\
&\leq& \Pr[\exists (u,v)\in B^o: \wht{\Pr}[\R,u\sim v]\leq (1-\epsilon_3)q] \nonumber\\
&\leq& \sum\nolimits_{(u,v)\in B^o\wedge u\neq v}\Pr[\wht{\Pr}[\R,u\sim v]\leq (1-\epsilon_3)q] \nonumber\\
&\leq& (n-k)\exp\left(-\frac{3\epsilon_3^2 \OPT_k^c}{2(1+\epsilon_3)}|\mathcal{R}|\right)\leq \delta \nonumber
\end{eqnarray}
%$B$ denote a $k$-clustering in $G$ such that $\wht{\KC}(\R, B)$ is maximized. When $|C^*_q|> \lceil\ln\frac{n}{\epsilon_1 } \rceil k$, we must have $\wht{\KC}(\mathcal{R}_q,B^o_q)< (1-\epsilon_3)q$. Besides, we have $\wht{\KC}(\mathcal{R}_q,B^o)\leq \wht{\KC}(\mathcal{R}_q,B^o_q)$. So we can get
%\begin{eqnarray}
%&&\Pr[|C^*_q|>\lceil\ln\frac{n}{\epsilon_1 } \rceil k\wedge \mathrm{OPT}_k^c\geq q]\nonumber\\
%&\leq& \Pr[\wht{\KC}(\mathcal{R}_q,B^o)< (1-\epsilon_3)\mathrm{OPT}_k^c \wedge \mathrm{OPT}_k^c\geq q]\nonumber\\
%&\leq& \exp\left(-\frac{3\epsilon_3^2q}{2(1+\epsilon_3)}|\mathcal{R}_q|\right)\leq \delta_q
%\end{eqnarray}
Hence, the lemma follows.
\end{proof}

\subsection{Proof of Theorem~\ref{thm:expectednumberofsamples}}
\begin{proof}
Suppose that $q_{min}$ is the smallest $q$ that is tested by $\mathsf{SearchKC+}$. Then we must have $q_{min}\in \{2^{-j}|j\geq 1\}$. Suppose that $q_{min}=2^{-i_{min}}$. Then the $\mathsf{SearchKC+}$ algorithm takes $i_{min}$ iterations to reduce $q$ from $\frac{1}{2}$ to $q_{min}$, and then takes at most another $i_{min}+\lceil \log_2\frac{1}{\epsilon_2}\rceil$ iterations to end. Note that $i_{min}$ is a random number. When $i_{min}=i$, the total number of generated random samples is no more than
\begin{eqnarray}
\ell(i)= \left\lceil\frac{2^{i+1}(1+\epsilon_3)}{3\epsilon_3^2 (1-\epsilon)}\ln\frac{\pi^2(n^2+n-2k)(2i+\lceil \log_2\frac{1}{\epsilon_2}\rceil)^2}{12\delta}\right\rceil\nonumber
\end{eqnarray}
It can be verified that $\forall i\geq 1: \ell_{i+1}\leq 3\ell_i$. Suppose that $i_0\geq 1$ is the smallest number such that $2^{-i_0}\leq \OPT_k^c$. So we must have $2^{-i_0+1}\geq  \OPT_k^c$. For any $i>i_0$, we have
\begin{eqnarray}
&&\Pr[i_{min}=i]\leq \Pr[|C^*_{i-1}|>\lceil\ln({n}/{\epsilon_1 })\rceil k]\nonumber\\
&\leq& \Pr[\wht{\KC}(\mathcal{R}_{i-1},B^o)< (1-\epsilon_3)\mathrm{OPT}_k^c ]\nonumber\\
&\leq& \exp\left(-\frac{3\epsilon_3^2\OPT_k^c}{2(1+\epsilon_3)}|\mathcal{R}_{i-1}|\right)\leq \delta^{\OPT_k^c\cdot 2^{i-1}}
\end{eqnarray}
where $\R_{i-1}$ and $C^*_{i-1}$ denote the set of generated random samples and the set of centering nodes found by $\mathsf{SearchKC+}$ when $q=2^{-i+1}$, respectively.
Let $a=\OPT_k^c\cdot 2^{i_0}$. When $\delta\leq 1/3$,  the total expected number of generated random samples is no more than
\begin{eqnarray}
&&\ell(i_0)+\sum_{i>i_0} \ell(i)\Pr[i_{min}=i]\nonumber\\
&\leq&\ell(i_0)+\sum_{j=0}^\infty 3^{j+1}\ell(i_0)\delta^{2^j a}\leq 2\ell(i_0)+ 3\ell(i_0)\sum_{j=1}^\infty 3^{j-2^j}\nonumber\\
&\leq&2\ell(i_0)+3\ell(i_0)\sum_{j=1}^\infty 3^{-j}\leq 7\ell(i_0)/2 \nonumber
\end{eqnarray}
Note that $i_0=\lceil \log_2\frac{1}{\OPT_k^c}\rceil$, so we have
\begin{eqnarray}
\ell(i_0)=\mathcal{O}\left(\frac{1}{\epsilon^2 \OPT_k^c}\left(\ln \frac{n}{\delta}+\ln\ln \frac{1}{\epsilon\OPT_k^c}\right)\right)
\end{eqnarray}
Hence the theorem follows.
\end{proof}

\end{appendix}

\bibliographystyle{ACM-Reference-Format}
\bibliography{mylib}

%%% -*-BibTeX-*-
%%% Do NOT edit. File created by BibTeX with style
%%% ACM-Reference-Format-Journals [18-Jan-2012].

\begin{thebibliography}{8}

%%% ====================================================================
%%% NOTE TO THE USER: you can override these defaults by providing
%%% customized versions of any of these macros before the \bibliography
%%% command.  Each of them MUST provide its own final punctuation,
%%% except for \shownote{}, \showDOI{}, and \showURL{}.  The latter two
%%% do not use final punctuation, in order to avoid confusing it with
%%% the Web address.
%%%
%%% To suppress output of a particular field, define its macro to expand
%%% to an empty string, or better, \unskip, like this:
%%%
%%% \newcommand{\showDOI}[1]{\unskip}   % LaTeX syntax
%%%
%%% \def \showDOI #1{\unskip}           % plain TeX syntax
%%%
%%% ====================================================================

\ifx \showCODEN    \undefined \def \showCODEN     #1{\unskip}     \fi
\ifx \showDOI      \undefined \def \showDOI       #1{#1}\fi
\ifx \showISBNx    \undefined \def \showISBNx     #1{\unskip}     \fi
\ifx \showISBNxiii \undefined \def \showISBNxiii  #1{\unskip}     \fi
\ifx \showISSN     \undefined \def \showISSN      #1{\unskip}     \fi
\ifx \showLCCN     \undefined \def \showLCCN      #1{\unskip}     \fi
\ifx \shownote     \undefined \def \shownote      #1{#1}          \fi
\ifx \showarticletitle \undefined \def \showarticletitle #1{#1}   \fi
\ifx \showURL      \undefined \def \showURL       {\relax}        \fi
% The following commands are used for tagged output and should be
% invisible to TeX
\providecommand\bibfield[2]{#2}
\providecommand\bibinfo[2]{#2}
\providecommand\natexlab[1]{#1}
\providecommand\showeprint[2][]{arXiv:#2}

\bibitem[\protect\citeauthoryear{Ceccarello, Fantozzi, Pietracaprina, Pucci,
  and Vandin}{Ceccarello et~al\mbox{.}}{2017}]%
        {Ceccarello17}
\bibfield{author}{\bibinfo{person}{Matteo Ceccarello}, \bibinfo{person}{Carlo
  Fantozzi}, \bibinfo{person}{Andrea Pietracaprina}, \bibinfo{person}{Geppino
  Pucci}, {and} \bibinfo{person}{Fabio Vandin}.}
  \bibinfo{year}{2017}\natexlab{}.
\newblock \showarticletitle{Clustering Uncertain Graphs}.
\newblock \bibinfo{journal}{\emph{{PVLDB}}} \bibinfo{volume}{11},
  \bibinfo{number}{4} (\bibinfo{year}{2017}), \bibinfo{pages}{472--484}.
\newblock


\bibitem[\protect\citeauthoryear{Feige}{Feige}{1998}]%
        {feige1998threshold}
\bibfield{author}{\bibinfo{person}{Uriel Feige}.}
  \bibinfo{year}{1998}\natexlab{}.
\newblock \showarticletitle{A threshold of ln n for approximating set cover}.
\newblock \bibinfo{journal}{\emph{J. ACM}} \bibinfo{volume}{45},
  \bibinfo{number}{4} (\bibinfo{year}{1998}), \bibinfo{pages}{634--652}.
\newblock


\bibitem[\protect\citeauthoryear{Goyal, Bonchi, Lakshmanan, and
  Venkatasubramanian}{Goyal et~al\mbox{.}}{2013}]%
        {Goyal2013}
\bibfield{author}{\bibinfo{person}{Amit Goyal}, \bibinfo{person}{Francesco
  Bonchi}, \bibinfo{person}{Laks V.~S. Lakshmanan}, {and}
  \bibinfo{person}{Suresh Venkatasubramanian}.}
  \bibinfo{year}{2013}\natexlab{}.
\newblock \showarticletitle{On minimizing budget and time in influence
  propagation over social networks}.
\newblock \bibinfo{journal}{\emph{Social Network Analysis and Mining}}
  \bibinfo{volume}{3}, \bibinfo{number}{2} (\bibinfo{year}{2013}),
  \bibinfo{pages}{179--192}.
\newblock


\bibitem[\protect\citeauthoryear{Krause, McMahan, Guestrin, and Gupta}{Krause
  et~al\mbox{.}}{2008}]%
        {krause2008robust}
\bibfield{author}{\bibinfo{person}{Andreas Krause}, \bibinfo{person}{H~Brendan
  McMahan}, \bibinfo{person}{Carlos Guestrin}, {and} \bibinfo{person}{Anupam
  Gupta}.} \bibinfo{year}{2008}\natexlab{}.
\newblock \showarticletitle{Robust submodular observation selection}.
\newblock \bibinfo{journal}{\emph{Journal of Machine Learning Research}}
  \bibinfo{volume}{9}, \bibinfo{number}{Dec} (\bibinfo{year}{2008}),
  \bibinfo{pages}{2761--2801}.
\newblock


\bibitem[\protect\citeauthoryear{Leskovec, Krause, Guestrin, Faloutsos,
  VanBriesen, and Glance}{Leskovec et~al\mbox{.}}{2007}]%
        {LeskovecKGFVG2007}
\bibfield{author}{\bibinfo{person}{Jure Leskovec}, \bibinfo{person}{Andreas
  Krause}, \bibinfo{person}{Carlos Guestrin}, \bibinfo{person}{Christos
  Faloutsos}, \bibinfo{person}{Jeanne VanBriesen}, {and}
  \bibinfo{person}{Natalie Glance}.} \bibinfo{year}{2007}\natexlab{}.
\newblock \showarticletitle{Cost-effective outbreak detection in networks}. In
  \bibinfo{booktitle}{\emph{KDD}}. \bibinfo{pages}{420--429}.
\newblock


\bibitem[\protect\citeauthoryear{Tang, Tang, Xiao, and Yuan}{Tang
  et~al\mbox{.}}{2018}]%
        {Tang2018}
\bibfield{author}{\bibinfo{person}{Jing Tang}, \bibinfo{person}{Xueyan Tang},
  \bibinfo{person}{Xiaokui Xiao}, {and} \bibinfo{person}{Junsong Yuan}.}
  \bibinfo{year}{2018}\natexlab{}.
\newblock \showarticletitle{Online Processing Algorithms for Influence
  Maximization}. In \bibinfo{booktitle}{\emph{SIGMOD}}.
  \bibinfo{pages}{991--1005}.
\newblock


\bibitem[\protect\citeauthoryear{Vazirani}{Vazirani}{2001}]%
        {Vazirani2001}
\bibfield{author}{\bibinfo{person}{Vijay~V. Vazirani}.}
  \bibinfo{year}{2001}\natexlab{}.
\newblock \bibinfo{booktitle}{\emph{Approximation Algorithms}}.
\newblock \bibinfo{publisher}{Springer-Verlag}, \bibinfo{address}{Berlin}.
\newblock


\bibitem[\protect\citeauthoryear{Williamson and Shmoys}{Williamson and
  Shmoys}{2011}]%
        {williamson2011design}
\bibfield{author}{\bibinfo{person}{David~P Williamson} {and}
  \bibinfo{person}{David~B Shmoys}.} \bibinfo{year}{2011}\natexlab{}.
\newblock \bibinfo{booktitle}{\emph{The design of approximation algorithms}}.
\newblock \bibinfo{publisher}{Cambridge university press}.
\newblock


\end{thebibliography}

\end{document}